\documentclass{article}
\usepackage{graphicx}
\usepackage{comment}
\usepackage{amsmath}
\usepackage{amssymb}
\usepackage{amsthm}
\usepackage{bbm}
\usepackage[utf8]{inputenc}
\usepackage[T1]{fontenc}
\usepackage{xcolor}
\usepackage{soul}
\usepackage{arydshln}
\usepackage[linesnumbered, ruled, vlined]{algorithm2e}
\usepackage{float} 
\usepackage{tikz}
\usepackage{multirow}
\usepackage{array}  
\usepackage[hidelinks]{hyperref}
\usepackage{dirtytalk}
\usepackage{tikz}
\usepackage{thmtools} 
\usepackage{thm-restate}
\usetikzlibrary{arrows.meta, positioning}
\usepackage{transparent}
\usepackage{pgfplots}
\pgfplotsset{compat=newest}
\usetikzlibrary{intersections}
\usetikzlibrary{3d}
\usepackage{tikz-3dplot}
\usepackage[shortlabels]{enumitem}
\usepackage{array}
\usepackage{forest}
\usepackage{colortbl}
\usepackage{xcolor}
\usepackage{graphicx}
\usepackage{subcaption}
\usepackage{arydshln} 
\usetikzlibrary{matrix}
\usetikzlibrary{trees, positioning}
\usepackage{pgfplots}
\pgfplotsset{compat=1.17}

\usepackage[numbers,square]{natbib}

\newcommand{\jobs}{n}
\newcommand{\machines}{m}
\newcommand{\instances}{\mathcal{I}}
\newcommand{\general}{\mathcal{C}}
\newcommand{\goods}{\mathcal{V}}
\newcommand{\normalized}{\mathcal{N}}

\newcommand{\opt}{\mathrm{OPT}}

\newtheorem{theorem}{Theorem}
\newtheorem{lemma}{Lemma}[section]
\newtheorem{remark}[lemma]{Remark}

\newtheorem{definition}[lemma]{Definition}

\newtheorem{corollary}[lemma]{Corollary}

\usepackage[margin=1in]{geometry}

\newcommand\blfootnote[1]{
  \begingroup
  \renewcommand\thefootnote{}
  \NoHyper\footnote{#1}\endNoHyper
  \addtocounter{footnote}{-1}
  \endgroup
}

\title{Proportionally Fair Makespan Approximation}
\usepackage{authblk}
\author[1,2]{Michal Feldman}
\author[3]{Jugal Garg}
\author[1]{Vishnu V. Narayan}
\author[1]{Tomasz Ponitka}
\affil[1]{Tel Aviv University}
\affil[2]{Microsoft ILDC}
\affil[3]{University of Illinois at Urbana-Champaign}
\date{June 19, 2026}

\allowdisplaybreaks

\begin{document}
\maketitle

\blfootnote{\hspace*{-2.2em}
A conference version of this article appeared in the proceedings of the 39th AAAI Conference on Artificial Intelligence (AAAI 2025)~\cite{conferenceversion}.

This project has been partially funded by the European Research Council (ERC) under the European Union's Horizon 2020 research and innovation program (grant agreement No. 866132), by an Amazon Research Award, by the NSF-BSF (grant number 2020788), by the Israel Science Foundation Breakthrough Program (grant No.2600/24), and by a grant from TAU Center for AI and Data Science (TAD). Jugal Garg was supported by NSF Grants CCF-1942321 and CCF-2334461.

We also thank Amos Fiat, Aleksander Łukasiewicz, Franciszek Malinka, Simon Mauras, and Divyarthi Mohan for invaluable discussions.} 

\begin{abstract}
We study fair mechanisms for the classic job scheduling problem on unrelated machines with the objective of minimizing the makespan. This problem is equivalent to minimizing the egalitarian social cost in the fair division of chores. The two {prevalent} fairness notions in the fair division literature are envy-freeness and proportionality. Prior work has established that no envy-free mechanism can provide better than an $\Omega(\log \machines / \log \log \machines)$-approximation to the optimal makespan, where $\machines$ is the number of machines, even when payments to the machines are allowed. In strong contrast to this impossibility, our main result demonstrates that there exists a proportional mechanism (with payments) that achieves a $3/2$-approximation to the optimal makespan, and this ratio is tight. To prove this result, we provide a full characterization of allocation functions that can be made proportional with payments. Furthermore, we show that for instances with normalized costs, there exists a proportional mechanism that achieves the optimal makespan. We conclude with important directions for future research concerning other fairness notions, including relaxations of envy-freeness. Notably, we show that the technique leading to the impossibility result for envy-freeness does not extend to its relaxations.
\end{abstract}

\newpage
\clearpage

\section{Introduction}

We consider the problem of fairly and efficiently scheduling a set $[n]$ of indivisible jobs on a collection $[m]$ of machines. Each machine-job pair $(i,j)$ has an associated processing time $c_{i,j}$. A schedule is an assignment of jobs to machines, where the processing time of machine $i$ is the sum of $c_{i,j}$ over all jobs $j$ assigned to machine $i$. A major objective in job scheduling is to minimize the \emph{makespan}, which is the processing time of the machine with the largest processing time ({i.e.,} the time needed to complete all the jobs).

The makespan minimization problem has been of great interest to the algorithms community. The seminal work of \citet*{lenstra1990approximation} provided a polynomial-time $2$-approximation algorithm for this problem and showed that no poly-time algorithm can guarantee a factor less than ${3}/{2}$ unless $\mathsf{P}=\mathsf{NP}$. However, beyond computational efficiency, there are additional properties that may be desired. A natural question arises: how do these other desiderata affect the makespan approximation?

A paradigmatic example, introduced by \citet*{nisan1999algorithmic}, considers scenarios where the processing times of jobs on each machine are private information, and one seeks a {\em truthful} mechanism---namely a schedule combined with payments to machines---that incentivizes machines to report their processing times truthfully. This problem has been foundational in the initiation of the field of algorithmic mechanism design. \citet*{nisan1999algorithmic} conjectured that any truthful deterministic mechanism incurs a factor-$\machines$ loss in the makespan compared to the optimal solution (even in the absence of any computational constraints), sparking a significant body of research (see Section~\ref{sec:additionalwork}), 
which culminated in the recent proof by \citet*{christodoulou2023proof} confirming their conjecture.

Another major desideratum, which is the focus of this paper, is {\em fairness}. There is a large body of literature on the \emph{fair division} problem, which asks how to divide a collection of undesirable items (chores) among a set of agents in a manner that is fair to every agent (see the recent surveys of \cite{amanatidis2023fair,liu2024mixed}).
The job scheduling problem can be interpreted as a fair division problem by viewing jobs as chores and machines as agents.
In this setting, the makespan objective aligns with the egalitarian social cost. 

The fair division problem has a rich history, beginning with the seminal work of \citet*{steinhaus1948problem}, who proposed \emph{proportionality} as a fairness objective. This notion ensures that each agent's cost does not exceed their fair share (a $1/m$ fraction of the total cost). 
Proportionality is one of the key fairness notions in the extensive research on the fair division of chores (see, e.g., \cite{wikipedia_proportional_chore_cutting,DBLP:conf/ijcai/FarhadiH18,amanatidis2023fair,liu2024mixed}).
Another prominent fairness notion is \emph{envy-freeness}, introduced by \citet*{Fol67}, where every agent weakly prefers their own allocation over that of any other agent. 
It is known that for job scheduling, both 
envy-freeness and proportionality
can be achieved through a mechanism with payments~\cite{aragones1995derivation,Hartline2008,cohen2010envy}. We note that payments are necessary to achieve either notion; this can be demonstrated via a simple instance with one indivisible job and two agents.
In this work, the main question we ask is the following:

\begin{center}
    Does there exist an assignment of jobs to machines (with payments) such that the resulting allocation is \emph{fair} and also has a \textit{small makespan}?
\end{center}

In other words, what is the \textit{price of fairness} for minimizing the makespan in the job scheduling problem? The price of fairness is an important notion that has been well-studied in various fair division settings. We refer the reader to Section~\ref{sec:additionalwork} for a discussion on this topic. 

For the notion of envy-freeness with payments, \citet*{cohen2010envy} demonstrated that no assignment can guarantee a makespan less than $\Omega(\log m/\log\log m)$ times the optimal makespan\footnote{We note that the improved bound of $\Omega(\log m)$ in~\cite{FiatL12} has a bug that, to the best of our knowledge, remains unresolved.}, improving upon an earlier lower bound of $(2-1/m)$ by \citet*{Hartline2008}. 
This shows that requiring envy-freeness necessarily results in a significant increase in the best possible makespan (even without any computational constraints). In stark contrast to this impossibility, we present surprising positive results for the notion of proportionality. 

\subsection{Our Results}\label{sec:our_results}

\begin{table*}[t]
    \centering
    \begin{tabular}{|c|c|c|c|}
    \hline
    & \textbf{Property} & \textbf{Upper Bound} & \textbf{Lower Bound}  \\ \hline
    \multirow{2}{*}{\begin{tabular}[c]{@{}c@{}}General  instances\end{tabular}} & Proportionality & ${3}/{2}$ {\footnotesize (Theorem~\ref{thm:proportional_general_upper_bound})}  & ${3}/{2}$ {\footnotesize (Theorem~\ref{thm:proportional_general_lower_bound})}  \\
    & Envy-freeness  & $O(\log \machines)$ {\footnotesize \cite{cohen2010envy}} & $\Omega(\log \machines/\log \log \machines)$  {\footnotesize \cite{cohen2010envy}}\\ \hline
    \multirow{2}{*}{\begin{tabular}[c]{@{}c@{}}Normalized  instances\end{tabular}} & Proportionality  & $1$ {\footnotesize (Theorem~\ref{thm:proportional_normalized_upper_bound})} & $1$  \\ 
    & Envy-freeness  & $O(\log \machines)$ {\footnotesize \cite{cohen2010envy}} & $\Omega(\log \machines/\log \log \machines)$ {\footnotesize (Lemma~\ref{lem:reduction})} \\ \hline
\end{tabular}
    \caption{Bounds for makespan approximation in job scheduling under two properties---proportionality and envy-freeness---examined across two settings: general costs and normalized costs. 
    }
    \label{tab:bounds}
\end{table*}

Our primary focus in this work is on proportional mechanisms for the job scheduling problem. In Section~\ref{sec:characterization}, we begin by characterizing all \emph{proportionable} job allocations, i.e., allocations that can be made proportional through payments.

We analyze proportional mechanisms in two key settings: a general setting with arbitrary cost functions and a normalized setting where the total cost for each machine equals a common value\footnote{The main objective we study, the makespan, is not scale-free, since scaling the cost functions can alter both the optimal makespan-minimizing allocation and the resulting makespan. Therefore, we distinguish between these two cases and provide an analysis for both.}. Our main results on proportionality are summarized and compared with known results on envy-freeness in Table~\ref{tab:bounds}.

In the general setting, we introduce the Anti-Diagonal Mechanism (Algorithm~\ref{alg:alloc}), which enables us to demonstrate the following result.

\begin{restatable}
    {theorem}{proportionalgeneralupperbound}\label{thm:proportional_general_upper_bound}
    There is a proportional mechanism for the job scheduling problem over general instances $\general$ that gives a $3/2$-approximation to the optimal makespan.
\end{restatable}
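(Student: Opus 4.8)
The plan is to route everything through the characterization of proportionable allocations from \Cref{sec:characterization}. That characterization reduces proportionability to a single aggregate cost condition: for budget-balanced payments, an assignment $\sigma$ (sending each job to one machine, with realized cost $C_i=\sum_{j:\sigma(j)=i}c_{i,j}$ on machine $i$) is proportionable if and only if its total realized cost does not exceed the sum of the machines' proportional shares,
\[
\sum_{j} c_{\sigma(j),j}\;\le\;\frac{1}{\machines}\sum_{i}\sum_{j}c_{i,j}\;=:\;B .
\]
The point is that budget-balanced side payments can redistribute cost arbitrarily, so the only obstruction is this global inequality. Hence it suffices to show: \emph{for every instance in $\general$ there is an assignment with makespan at most $\tfrac32\,\opt$ whose total cost is at most $B$.} Given such a $\sigma$, I set $p_i=C_i-\mathrm{PS}_i$ and distribute the nonnegative slack $B-\sum_j c_{\sigma(j),j}$ among the machines, so that every machine's net cost meets its proportional share while payments stay balanced.

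To build the assignment I would scale so that $\opt=1$ and keep two reference schedules in mind. An optimal schedule $\sigma^{*}$ has all loads at most $1$ (makespan feasible) but may violate the cost budget $B$. The opposite extreme, assigning each job to its globally cheapest machine, always satisfies the budget, since $\sum_j \min_i c_{i,j}\le \frac1\machines\sum_{i,j}c_{i,j}=B$, but can have unbounded makespan. The Anti-Diagonal Mechanism (\Cref{alg:alloc}) interpolates: starting from $\sigma^{*}$, it reassigns jobs toward cheaper machines in the prescribed (anti-diagonal) order, monotonically decreasing total cost, and it does so under the invariant that no machine is ever loaded beyond $\tfrac32$. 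The two things I must then verify are that this process drives the total cost down to at most $B$, and that the invariant is maintained throughout.

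For the cost guarantee I would track the running total cost as a potential and show the reassignment terminates with $\sum_j c_{\sigma(j),j}\le B$, at which point the characterization converts the schedule into a proportional mechanism as above. For the makespan guarantee the crux is to argue that each rebalancing move overloads its target machine by at most $\tfrac12$ beyond $\opt$. I would split jobs into \emph{big} (cost exceeding $\tfrac12\opt$ on the relevant machine) and \emph{small}: at makespan $\opt$ each machine carries at most one big job, so big jobs can be kept essentially in place, and only small jobs are moved to meet the budget. A Shmoys--Tardos-style rounding of the associated fractional assignment preserves total cost exactly while increasing each load by at most one small job, i.e.\ by at most $\tfrac12\opt$, giving makespan at most $\tfrac32\opt$.

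The hard part is exactly this simultaneous control of a global quantity (the cost budget $B$) and a local one (per-machine load): every step that lowers total cost pushes a job onto a cheaper machine, which is precisely what risks overloading it. Squeezing the factor to $\tfrac32$ rather than the $2$ that generic rounding yields is delicate and rests entirely on the big/small dichotomy together with the order in which jobs are relocated; the main obstacle is proving that the rebalancing can always reach the budget $B$ \emph{before} any machine is forced past $\tfrac32\,\opt$. I would close the loop by checking against the matching instance of \Cref{thm:proportional_general_lower_bound}, which shows the extra $\tfrac12\,\opt$ of load is genuinely unavoidable and hence that the analysis is tight.
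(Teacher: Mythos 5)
Your first step is right and matches the paper: by the characterization in \Cref{thm:prop_mean_efficient}, it suffices to exhibit, for every instance, an allocation with makespan at most $\tfrac{3}{2}\opt$ whose total cost is at most $(1/\machines)\sum_{i}c_i([\jobs])$; the payments $p_i=c_i(A_i)-(1/\machines)\cdot c_i([\jobs])$ then make it proportional (budget balance is not even required). The gap is in the construction. You describe the Anti-Diagonal Mechanism as starting from the makespan-optimal schedule and pushing jobs to cheaper machines until the cost budget is met, and you yourself flag the crux --- that the budget is reached \emph{before} any machine is forced past $\tfrac{3}{2}\opt$ --- as unproven. That is exactly where the plan breaks. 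Your fallback, the big/small dichotomy plus a Shmoys--Tardos-style rounding, needs a \emph{fractional} assignment that simultaneously has total cost within the budget, fractional load at most $\opt$ on every machine, and support only on small jobs; you never establish that such a fractional solution exists. The natural budget-meeting candidate, the uniform assignment $x_{i,j}=1/\machines$, has load $c_i([\jobs])/\machines$ on machine $i$, which can be arbitrarily larger than $\opt$, so there is nothing valid to hand to the rounding step.

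The paper's argument runs in the opposite direction and averages over \emph{bundles}, not jobs. It treats the optimal bundles $B_1,\dots,B_\machines$ as indivisible meta-jobs and considers the $\machines$ cyclic anti-diagonal assignments $i\mapsto B_{\machines-i+k}$; since these partition all machine--bundle pairs, the cheapest one has total cost at most $(1/\machines)\sum_i c_i([\jobs])$, i.e.\ it is mean-efficient from the outset. The difficulty is then reversed --- the makespan may be huge --- and is repaired by pairwise swap and merge operations on antipodal pairs $(i,\machines-i+k)$, each of which only decreases total cost and hence preserves mean efficiency. The $\tfrac{3}{2}$ bound follows from a case analysis: each final bundle is one of $\emptyset$, $B_\ell$, $B_h$, $B_\ell\cup B_h$, and the dangerous case $A_\ell=B_h$ with $c_\ell(B_h)>\tfrac{3}{2}M$ is ruled out because the failed swap condition forces $c_h(B_\ell)\le\tfrac{1}{2}M$ and $c_h(B_\ell)<c_\ell(B_\ell)$, which would have triggered a merge. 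You would need either to supply this bundle-level averaging idea or to prove the existence of your fractional solution; as written, the proposal does not establish the theorem.
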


We also prove that the result above is tight.

\begin{restatable}
    {theorem}{proportionalgenerallowerbound}\label{thm:proportional_general_lower_bound}
For every case where $\jobs \geq \machines$, there is no proportional mechanism for the job scheduling problem over general instances $\general$ that gives a $(3/2-\epsilon)$-approximation to the optimal makespan for any $\epsilon > 0$.    
\end{restatable}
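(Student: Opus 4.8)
The plan is to exhibit, for every $\machines$ (and any $\jobs \geq \machines$), a single hard instance on which no proportional mechanism can beat the $3/2$ ratio, thereby matching the upper bound of \Cref{thm:proportional_general_upper_bound}. Since the lower bound must hold even without computational constraints, I would construct an explicit small family of instances and argue combinatorially. The natural candidate is the instance underlying the $\Omega(3/2)$ hardness of \citet{lenstra1990approximation}: take $\machines$ machines and $\jobs = \machines$ jobs where the optimal makespan is some value $T$, but every assignment that respects proportionality is forced to co-locate two ``large'' jobs on one machine, pushing its load (and hence the makespan) to $3T/2$. Because $\jobs \geq \machines$ is assumed, I can pad with zero-cost (or negligible-cost) dummy jobs to reach any desired $\jobs$ without changing the argument.

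The key conceptual tool is the characterization of proportionable allocations promised in \Cref{sec:characterization}: an allocation can be made proportional with payments if and only if it satisfies the structural condition stated there. So the first step is to \emph{reduce} the claim to a purely combinatorial statement about allocations: a mechanism achieving a $(3/2-\epsilon)$-approximation must, on my instance, output an allocation whose makespan is below $(3/2-\epsilon)T$, and such an allocation must also be proportionable. My task is then to show these two requirements are incompatible. Concretely, I would (i) enumerate the allocations of the hard instance whose makespan is strictly below $3T/2$, (ii) show each such low-makespan allocation violates the proportionability characterization (intuitively, because proportionality caps each agent's \emph{net} cost at a $1/\machines$ share, and the payments required to respect those caps simultaneously for all machines cannot be balanced unless some machine absorbs a disproportionately large load), and (iii) conclude that any proportionable allocation has makespan at least $3T/2$, giving the ratio $3/2$.

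The first step I would actually carry out is to pin down the instance: I expect a two-value structure, e.g.\ costs in $\{T/2, T, \infty\}$ (or large finite values standing in for $\infty$) arranged so that the unique optimum assigns one job of cost $T$ per machine (makespan $T$), while the proportional share $1/\machines$ of the total cost forces the combinatorics. I would verify $\opt = T$ and then characterize exactly which machine-job incidences are ``affordable'' under proportionality. The heart of the argument is a counting/pigeonhole step showing that the proportionability constraints (the characterization inequalities summed over all machines, or equivalently a feasibility condition on the payment vector) cannot be met by any allocation that splits the load finely enough to keep the makespan below $3T/2$.

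The main obstacle I anticipate is step (ii): translating the proportionability characterization into a clean, checkable obstruction for low-makespan allocations. Proportionality with payments couples the allocation to a payment vector, so a low-makespan allocation is forbidden only if \emph{no} payment vector makes it proportional; I must therefore argue about the existence of feasible payments, not merely about raw costs. I expect the characterization to reduce this to an aggregate inequality (a sum of per-agent share constraints) that is violated precisely when every machine's load stays below $3T/2$ — essentially a conservation argument showing total cost cannot be distributed within everyone's proportional share while keeping each individual load small. Making the $\epsilon$ quantitative (ruling out $3/2-\epsilon$ for every $\epsilon>0$, i.e.\ showing the bound is exactly $3/2$ and not merely asymptotically) will require the instance to have the optimum and the forced makespan separated by an exact factor of $3/2$, which constrains the cost values and is where I would spend the most care.
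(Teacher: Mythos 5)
Your overall architecture matches the paper's proof exactly: construct an explicit instance, show that any allocation with makespan below $(3/2-\epsilon)\cdot\opt$ is forced to be a specific high-total-cost allocation, and then invoke the characterization of proportionable allocations as mean-efficient ones (\Cref{thm:prop_mean_efficient}) to show that allocation cannot be made proportional. The padding with zero-cost jobs to handle $\jobs \geq \machines$ is also what the paper does. However, there is a concrete gap in your instance design that would sink the execution as written. You propose costs in $\{T/2,\, T,\, \infty\}$ ``or large finite values standing in for $\infty$.'' This cannot work: mean efficiency requires $\sum_i c_i(A_i) \leq (1/\machines)\sum_{i,j} c_{i,j}$, i.e., the obstruction is measured against the average of the \emph{entire} cost matrix, including all the unused ``forbidden'' entries. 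If those entries are $\infty$ or merely large, the right-hand side blows up and every allocation becomes mean-efficient, so the characterization yields no contradiction at all. The forbidden entries must be simultaneously (a) strictly larger than $(3/2-\epsilon)\cdot\opt$, so that no $(3/2-\epsilon)$-approximate allocation can use them, and (b) small enough that the matrix average stays strictly below the total cost of the forced allocation. With $\opt=1$, diagonal entries $1$, and subdiagonal entries $1/2$, a short computation shows the off-diagonal ``large'' entries must lie in the open interval $(3/2-\epsilon,\ 3/2)$; the paper takes $3/2-\epsilon/2$. Your plan, which treats these entries as effectively infinite, misses exactly this tension, and it is the heart of why the construction works and why the bound is exactly $3/2$.

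A secondary issue: the instance underlying the $3/2$ NP-hardness of \citet{lenstra1990approximation} is a red herring here. That construction encodes a combinatorial reduction for computational hardness; the lower bound in this theorem is information-theoretic (it holds with unbounded computation) and relies on a different mechanism entirely, namely the upper-triangular structure that forces the diagonal allocation via a backward induction on jobs $\machines, \machines-1, \ldots, 1$ (each job $j$ must go to machine $j$ because above-diagonal entries are too large individually and below-diagonal entries would stack a $1/2$ on top of an already-assigned cost-$1$ job). You would need to supply that induction, or some equivalent uniqueness argument, to complete step (i) of your plan.
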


The Anti-Diagonal Mechanism used in the proof of Theorem~\ref{thm:proportional_general_upper_bound} takes as input an arbitrary allocation, and outputs an allocation that is proportionable and has a makespan that is at most $3/2$ times the makespan of the {input allocation}. Together with the {known} polynomial-time 2-approximation algorithm for the optimal makespan \cite{lenstra1990approximation}, this implies the existence of a polynomial-time algorithm {that finds a proportionable allocation with a constant-factor approximation of the optimal makespan.}

For normalized instances, we show that the above result can be improved to achieve the optimal makespan.

\begin{restatable}
    {theorem}{proportionalnormalizedupperbound}\label{thm:proportional_normalized_upper_bound}
   There is a proportional mechanism for the job scheduling problem over normalized instances $\normalized$ that attains the optimal makespan.
\end{restatable}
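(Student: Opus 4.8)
The plan is to combine the characterization of proportionable allocations from \Cref{sec:characterization} with a structural property of optimal-makespan allocations that is special to normalized instances. Since a proportional mechanism is free to output, on each instance, any allocation that can be made proportional with payments, it suffices to exhibit for every normalized instance a single allocation that (i) attains the optimal makespan $\opt$ and (ii) is proportionable. By the characterization, proportionability of an allocation $A$ reduces to a bound on its total realized cost: the sum over all agents of their realized processing times must not exceed the sum of their fair shares. For a normalized instance every agent's fair share equals $C/\machines$, where $C$ is the common total cost, so this bound becomes $\sum_i \mathrm{cost}_i(A) \le C$. Thus \Cref{thm:proportional_normalized_upper_bound} reduces to the purely combinatorial claim that every normalized instance admits an allocation of makespan $\opt$ whose total realized cost is at most $C$.

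First I would record the averaging identity that makes the bound plausible. Writing $\bar c_j = \frac1\machines\sum_i c_{i,j}$ for the average cost of job $j$ and $\sigma(j)$ for the machine to which $j$ is assigned, normalization gives $C = \sum_j \bar c_j$, so the target inequality is exactly $\sum_j \bigl(c_{\sigma(j),j} - \bar c_j\bigr) \le 0$, i.e.\ that on average jobs are placed at below-average cost. Averaging the total realized cost over all relabelings of the machines shows that \emph{some} allocation meets this bound, since the average placement cost of each job is precisely $\bar c_j$; but a relabeled allocation need not preserve the makespan, and the entire difficulty lies in coupling the cost bound with makespan-optimality.

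To achieve this coupling I would take the allocation $A$ that minimizes total realized cost \emph{among all allocations of makespan $\opt$} (breaking ties lexicographically) and argue by contradiction that $\sum_i \mathrm{cost}_i(A) \le C$. If the total cost exceeded $C$, the identity above forces some job $j$ to be assigned strictly above its average cost, hence to a machine on which it is more expensive than on some alternative machine $i'$ with $c_{i',j} < c_{\sigma(j),j}$. I would then perform a cost-decreasing reassignment — a single move when $i'$ has slack $\opt - \mathrm{cost}_{i'}(A)$ large enough to absorb $j$, and otherwise an augmenting cycle of job transfers through machines with residual capacity — chosen so that no machine's load ever exceeds $\opt$. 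Such a reassignment strictly lowers the total cost while keeping the makespan equal to $\opt$, contradicting the minimality of $A$; hence $\sum_i \mathrm{cost}_i(A) \le C$ and $A$ is proportionable.

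The main obstacle is precisely this last step: exhibiting a cost-decreasing reassignment that respects the makespan constraint whenever the realized total cost is above $C$. The constraints $\sum_{j\in A_i} c_{i,j}\le \opt$ have job- and machine-dependent coefficients, so the reassignment graph is not a standard transportation network and a naive local move may overload a machine. I would handle this with a cycle-canceling argument on an exchange graph whose arcs encode feasible single-job transfers between machines with residual capacity, using the normalization (equal total costs, and hence equal fair shares) to guarantee that whenever the realized total cost exceeds the average $C$ there is always a feasible improving cycle. As a cross-check I would verify the claim directly on small cases and confirm that the Anti-Diagonal Mechanism of \Cref{thm:proportional_general_upper_bound} incurs no makespan blow-up once specialized to normalized instances.
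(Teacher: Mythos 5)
Your reduction is exactly the paper's: both arguments pick the allocation that minimizes total cost among all makespan-optimal allocations and then show it is mean-efficient (total realized cost at most $C$), after which \Cref{thm:prop_mean_efficient} supplies the payments. The gap is in the one step you yourself flag as the main obstacle: you never establish that when the total cost exceeds $C$ there exists a feasible cost-decreasing reassignment that keeps every load at most $\opt$. Your job-level exchange graph carries knapsack-type constraints $\sum_{j \in A_i} c_{i,j} \le \opt$ with machine-dependent coefficients, so it is not a flow network, and the existence of an ``improving cycle through machines with residual capacity'' is exactly the kind of claim that can fail there. Finding one job placed above its per-job average cost gives you a cheaper machine for that job but not a feasible move, and you give no argument that a feasible improving cycle must exist. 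As written, the proof is incomplete at its only nontrivial point.

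The paper closes this gap by working with whole bundles rather than individual jobs, which makes the capacity issue vanish. Build a digraph on machines with an arc $(i,j)$ whenever $c_j(A_i) < c_i(A_i)$. If this graph had a cycle, cyclically permuting the bundles along it would strictly decrease the total cost, and the makespan would stay at $\opt$ because each machine on the cycle ends up holding exactly one bundle, at a cost strictly below what its previous owner paid, which is at most $\opt$ --- no residual-capacity bookkeeping is needed. Hence the graph is acyclic, so some machine $j$ has no incoming arcs, i.e.\ $c_i(A_i) \le c_j(A_i)$ for every $i$. Summing over $i$ and using normalization gives $\sum_{i} c_i(A_i) \le \sum_{i} c_j(A_i) = c_j([\jobs]) = C$, which is the mean-efficiency bound you need; note that this compares the allocation against a single machine's normalized total rather than going through your per-job averaging identity. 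If you want to salvage your route, replace single-job transfers by cyclic permutations of bundles and the contradiction with cost-minimality goes through immediately.
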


Additionally, we establish that the logarithmic lower bound for envy-freeness holds even for normalized instances, which further widens the gap between the best proportional mechanism and any envy-free mechanism compared to general instances (see Lemma~\ref{lem:reduction}).

Furthermore, we extend our analysis of the job scheduling problem to the dual problem of the fair division of goods with the objective of maximizing the egalitarian welfare, formally introduced in Section~\ref{sec:goods}. For general valuations, we demonstrate that, unlike the job scheduling problem, the goods allocation problem admits no proportional approximation to the optimal egalitarian welfare.

\begin{restatable}{theorem}{goodsproportionallowerboundgeneral}
    For every case where $\jobs \geq \machines$, there is no proportional mechanism for the goods allocation problem over general instances $\goods$ that gives a $\beta$-approximation to the optimal egalitarian welfare for any $\beta > 0$.
\end{restatable}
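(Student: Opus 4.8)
The plan is to reduce the impossibility to the (easy direction of the) proportionability characterization in the goods setting, and then exhibit a single hard instance for each pair $(\jobs,\machines)$ with $\jobs \ge \machines$. Write $v_{i,j} \ge 0$ for the value agent $i$ has for good $j$, let $V_i = \sum_j v_{i,j}$, and let $\mathrm{val}_i$ denote the realized value of agent $i$'s bundle. The first step is to record the necessary direction of the characterization: if an allocation is made proportional by budget-feasible payments $p_i$ (with $\sum_i p_i \le 0$, the convention under which the characterization of \Cref{sec:characterization} operates), then summing the proportionality inequalities $\mathrm{val}_i + p_i \ge V_i/\machines$ over all agents gives
\[
\sum_i \mathrm{val}_i \;\ge\; \frac{1}{\machines}\sum_i V_i \;=\; \frac{1}{\machines}\sum_{i,j} v_{i,j} \;=:\; T .
\]
Hence every proportionable allocation has total realized value at least the average total value $T$. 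It therefore suffices to construct an instance in which some allocation gives every agent strictly positive value (so the optimal egalitarian welfare is positive), while every such allocation has total realized value strictly below $T$ (so none of them is proportionable); this forces every proportionable allocation to leave some agent with value $0$, i.e.\ egalitarian welfare $0$, which rules out a $\beta$-approximation for all $\beta>0$ simultaneously.

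Second, I would build the instance from a ``staircase'' value matrix. Fix parameters $0 < G < H/\machines$, take $\machines$ goods indexed like the agents, and set $v_{i,i} = G$ for all $i$, $v_{i,i+1} = H$ for $1 \le i \le \machines-1$, and $v_{i,j} = 0$ for every other pair; if $\jobs > \machines$, append $\jobs - \machines$ dummy goods valued $0$ by every agent. The value-inflating entries $H$ sit on the super-diagonal, and the crux of the placement is that they cannot appear in any allocation that makes all agents happy.

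Third, I would verify the two required properties. For the forcing claim, note that agent $\machines$ values only good $\machines$ positively, so any positive-welfare allocation must give good $\machines$ to agent $\machines$; then agent $\machines-1$, whose only positively valued goods are $\machines-1$ and $\machines$, is forced onto good $\machines-1$, and cascading downward shows that the unique way to give all agents positive value is to assign good $i$ to agent $i$ for every $i$ (dummy goods distributed arbitrarily). This allocation gives each agent value exactly $G$, so the optimal egalitarian welfare equals $G > 0$; but its total realized value is only $\machines G$, whereas $T = G + \frac{(\machines-1)H}{\machines}$, and the choice $G < H/\machines$ yields $\machines G < T$. Thus the unique positive-welfare allocation is not proportionable, so every proportionable allocation (for instance the super-diagonal cycle $i \mapsto i+1$, $\machines \mapsto 1$, which has total value $(\machines-1)H \ge T$ but leaves agent $\machines$ with value $0$) must have egalitarian welfare $0$, completing the argument.

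The main obstacle is the averaging phenomenon that makes this strictly harder than the chores lower bound: for $\jobs=\machines$ the maximum-weight perfect matching always has total value at least $T$, so a proportionable matching always exists, and one must guarantee that no such matching has positive welfare. The staircase construction is engineered precisely to defeat this, since the entries $H$ that inflate $T$ are placed so that the only all-positive matching (the diagonal) avoids them entirely and is therefore pushed below the inflated threshold. The remaining care lies in (i) pinning down the budget-balance convention under which the displayed necessary condition holds, so that it matches \Cref{sec:characterization}, and (ii) making the forcing cascade airtight in the presence of the extra goods, which is why they are given value $0$ to every agent so that they can never substitute for a positively valued good.
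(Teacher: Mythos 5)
Your proof is correct and follows essentially the same route as the paper: both invoke the necessary direction of the proportionability--mean-efficiency equivalence (\Cref{cor:prop_mean_efficient_goods}) and then exhibit an instance in which every allocation giving all agents positive value has total realized value strictly below the threshold $(1/\machines)\cdot\sum_{i} v_i([\jobs])$, forcing every proportionable allocation to have egalitarian welfare $0$; your staircase gadget simply replaces the paper's gadget with one highly-valued good. One minor remark: under the paper's definition the payments appear on both sides of the proportionality inequality and cancel upon summation, so the mean-efficiency necessary condition needs no budget-balance assumption $\sum_i p_i \le 0$.
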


We also show that for normalized instances the situation for goods allocation mirrors that of scheduling.

\begin{restatable}{theorem}{goodsnormalizedtheorem}\label{thm:goods_normalized}
There is a proportional mechanism for the goods allocation problem over normalized instances $\normalized$ that attains the optimal egalitarian welfare.
\end{restatable}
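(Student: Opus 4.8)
The plan is to mirror the structure of the scheduling result \Cref{thm:proportional_normalized_upper_bound}, using the characterization of proportionable allocations from \Cref{sec:characterization} transported to the goods setting of \Cref{sec:goods}. Write $V$ for the common normalization constant, so that $v_i([\jobs]) = V$ for every agent $i$. The first step is to record the goods analogue of that characterization: with budget-balanced transfers, an allocation $(S_1,\dots,S_\machines)$ admits payments making it proportional if and only if its realized total value satisfies $\sum_{i} v_i(S_i) \ge V$. Indeed, proportionality requires $v_i(S_i)+\pi_i \ge V/\machines$ for each $i$, and setting
\[
\pi_i \;=\; \frac{V}{\machines} - v_i(S_i) + \frac{1}{\machines}\Big(\sum_{k} v_k(S_k) - V\Big)
\]
yields $\sum_i \pi_i = 0$ and equalizes every agent's utility to $\tfrac{1}{\machines}\sum_k v_k(S_k)$, which is $\ge V/\machines$ exactly when $\sum_k v_k(S_k) \ge V$; conversely, summing the $\machines$ proportionality inequalities shows the condition is necessary. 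Hence it suffices to exhibit an allocation that attains the optimal egalitarian welfare $w^\ast$ and simultaneously has total value at least $V$.

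The reduction above turns the theorem into a purely combinatorial claim: among all allocations achieving egalitarian welfare $w^\ast$, at least one has $\sum_i v_i(S_i) \ge V$. I would prove this by taking the welfare-$w^\ast$ allocation $A$ of maximum total value and arguing $\sum_i v_i(S_i) \ge V$ by contradiction. The starting observation is that the utilitarian-optimal allocation, which assigns each good to a highest-valuing agent, has total value $\sum_j \max_i v_{i,j} \ge \frac{1}{\machines}\sum_{i,j} v_{i,j} = V$ by normalization. Thus if $A$ had total value below $V$, some good $g$ would sit with an agent $a$ while another agent $b$ values it strictly more; reassigning $g$ from $a$ to $b$ strictly increases the total value and can only raise $b$'s value, so by maximality of $A$ it must drop $a$ below $w^\ast$.

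The main obstacle is precisely this coupling: a single beneficial move may destroy feasibility at the donor, so one must instead route the improvement through a value-increasing, welfare-preserving sequence of reassignments---an augmenting path or cycle in the exchange graph whose vertices are the agents and whose arcs record goods that their current holder values less than some other agent. The crux is to show such an augmenting structure always exists while $\sum_i v_i(S_i) < V$, producing a strictly larger total value with every agent still at value $\ge w^\ast$ and contradicting the maximality of $A$. I note that the regime $w^\ast \ge V/\machines$ is immediate, since then any welfare-$w^\ast$ allocation already has $\sum_i v_i(S_i) \ge \machines\, w^\ast \ge V$; all the difficulty is concentrated in the case $w^\ast < V/\machines$, where indivisibility prevents handing every agent its proportional share and one must argue that the slack is globally re-routable to a higher-valuing agent. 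This step is the exact mirror of the corresponding argument for scheduling (where one instead minimizes total cost among makespan-optimal allocations and shows it is at most $V$), and once it is in place the payments displayed above complete the construction of the desired proportional mechanism.
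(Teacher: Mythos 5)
Your overall skeleton is the right one and matches the paper: reduce to the goods version of the mean-efficiency characterization (\Cref{cor:prop_mean_efficient_goods}), i.e.\ total realized value at least $V$, and then exhibit such an allocation among the egalitarian-optimal ones by taking the one of maximum total value. Your payment formula and the necessity direction (summing the $\machines$ proportionality inequalities) are fine. But the step you yourself flag as ``the crux'' --- that the maximum-total-value welfare-$w^\ast$ allocation has total value at least $V$ --- is exactly the content of the theorem, and it is left unproved. Worse, the mechanism you sketch for it is the wrong one: an exchange graph whose arcs record \emph{individual goods} that some other agent values more. Reassigning single goods along a path strictly lowers the value of the donor at the head of the path, and even along a cycle agent $a_i$ gives up good $g_i$ and receives $g_{i-1}$, so its value changes by $v_{a_i}(g_{i-1})-v_{a_i}(g_i)$, which can be negative and push it below $w^\ast$. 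So welfare preservation is not guaranteed for either paths or single-good cycles, and it is not clear that such an augmenting structure exists whenever the total value is below $V$.

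The missing idea (which is how the paper proves \Cref{thm:proportional_normalized_upper_bound}, applied verbatim to $c=-v$) is to work at the level of \emph{whole bundles}: build the graph on agents with an arc $(i,j)$ whenever $v_j(A_i) > v_i(A_i)$. Along any cycle of this graph one can permute entire bundles; each recipient $i_{j+1}$ then holds a bundle worth $v_{i_{j+1}}(A_{i_j}) > v_{i_j}(A_{i_j}) \ge w^\ast$, so the egalitarian welfare is preserved while the total value strictly increases, contradicting maximality. Hence the graph is acyclic and has a vertex $j$ with no incoming arcs, i.e.\ $v_i(A_i) \ge v_j(A_i)$ for every $i$, and then normalization closes the argument in one line: $\sum_i v_i(A_i) \ge \sum_i v_j(A_i) = v_j([\jobs]) = V$. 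Note that this sink argument proves the total value is at least $V$ \emph{directly} from maximality; one never needs to manufacture an improving move from the hypothesis ``total value $< V$,'' which is what makes your contrapositive framing harder than necessary. (Your side remarks --- that the utilitarian optimum has value $\ge V$, and that the case $w^\ast \ge V/\machines$ is immediate --- are correct but do not substitute for this step.)
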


Although this work primarily focuses on proportional mechanisms, in Section~\ref{sec:approx_ef}, we explore another important direction by considering relaxations of envy-freeness. We provide a characterization of allocation rules that can be made \emph{approximately} envy-free with payments.
While we leave open the question of whether a constant-factor approximation to the makespan can be achieved with an $\alpha$-envy-free mechanism for some constant $\alpha < 1$, we make some progress by showing that the technique used in \cite{cohen2010envy} to establish the impossibility result for envy-freeness does not extend to $\alpha$-envy-freeness for any $\alpha < 1$.

\subsection{Our Techniques}

The first technical contribution of this work is a characterization of all allocation rules that can be paired with a payment function to form a proportional mechanism. Our characterization is based on a new concept of mean efficiency, which requires that the total cost $\sum_{i \in [\machines]} c_i(A_i)$ of an allocation $A$ does not exceed the average cost in the cost matrix multiplied by $\jobs$, i.e., $\sum_{i \in [\machines]} c_i(A_i) \leq (1/\machines) \cdot \sum_{i \in [\machines]} \sum_{j \in [\jobs]} c_{i,j}$. 
This condition naturally introduces a new efficiency criterion for job scheduling. Notably, our characterization reveals a perhaps surprising property that the fairness notion of proportionality for the job scheduling problem is equivalent to a notion of efficiency.

We also provide a simple formula to compute payments that ensure any mean-efficient allocation rule $A$ is proportional. Specifically, each machine is compensated for the cost it incurs minus its proportional share of the total cost of 
all jobs. Formally, we set $p_i = c_i(A_i) - (1/\machines) \cdot c_i([\jobs])$.

Our main technical contribution is the Anti-Diagonal Mechanism (Algorithm~\ref{alg:alloc}) which is proportional and provides a $3/2$-approximation to the optimal makespan for any general instance.
Here, we briefly outline the intuition behind this mechanism. 
We begin with an initial allocation $B$ that minimizes the makespan across all possible job-to-machine assignments, given as input to the mechanism.
The key challenge is that this allocation may not be proportional. The mechanism's objective is to adjust the allocation to ensure proportionality without increasing the makespan by more than a factor of $3/2$.
For the sake of simplifying the high-level argument, let us assume that the number of jobs equals the number of machines and that each machine $i$ is assigned job $i$.
While these assumptions are \textit{not} explicitly used in our proof, they can be made without loss of generality. This is achieved by merging all jobs in each bundle $A_i$ into a single meta-job, a process that does not create any new proportionable allocations, and then relabeling the jobs accordingly.

In the first step, the algorithm evaluates all $\machines$ anti-diagonals in the $\machines$-by-$\machines$ cost matrix $c$ and selects the one that minimizes the total cost.
Given that the average cost of all anti-diagonals equals the average cost of the entire matrix $c$, the allocation corresponding to the cost-minimizing anti-diagonal must be mean-efficient.
The remaining challenge is that this selected anti-diagonal allocation might result in a high makespan. Our goal is to reduce the makespan while preserving mean-efficiency.
We demonstrate that this can be achieved through a series of carefully executed merge and swap operations on pairs of antipodal machines along the anti-diagonal.

We employ a different mechanism for normalized instances. Specifically, we always select the makespan-minimizing allocation $A$ with the minimum cost. We prove that this allocation is mean-efficient. The core idea of the proof is to construct a graph where each node represents a machine. We add an edge from machine $i$ to machine $j$ if $j$ is more efficient at processing the jobs allocated to $i$, i.e., $c_j(A_i) < c_i(A_i)$. This leads to a key observation: there exists a machine $j$ that is weakly less efficient than all others for their respective bundles of jobs, meaning $c_i(A_i) \leq c_j(A_i)$ for every machine $i$. The result then naturally follows from this observation.

\subsection{Related Works}\label{sec:additionalwork}

There is a large body of related literature on the job scheduling problem, the fair division problem with payments, the fair division problem for chores, and the price of fairness. We include a detailed discussion of the literature on each of these topics below.

\paragraph{The Job Scheduling Problem.} The job scheduling problem has a long history of study by the algorithms community. The problem of minimizing the makespan has been considered for a variety of special cases, including identical machines \cite{graham1966bounds,sahni1976algorithms,coffman1978application,yue1990exact} and uniform machines, in which each machine has a different processing speed \cite{horowitz1976exact}. For the general variant that we consider in our work---the unrelated-machines scheduling problem---the gap between the upper and lower bounds in the seminal work of \citet*{lenstra1990approximation} remains unresolved. The job scheduling problem has also been studied for a variety of objectives beyond the makespan (see e.g. \cite{alon1998approximation,epstein2004approximation}).

Notably, our characterization of proportional allocations (Theorem~\ref{thm:prop_mean_efficient}) states that the average completion time within the allocation is below a certain threshold. Thus, our results can be interpreted in the context of a bicriteria optimization problem, aiming to minimize both the average completion time and the makespan. Similar bicriteria optimization problems for job scheduling have been extensively studied, e.g., in \cite{DBLP:journals/orl/SteinW97,DBLP:conf/soda/AslamRSY99,DBLP:conf/soda/RasalaSTU02}.

As stated in the introduction, a major question in job scheduling was raised by \citet*{nisan1999algorithmic}, who conjectured that no \textit{truthful} deterministic mechanism for job scheduling attains a makespan that is less than $\Omega{(m)}$ times the optimal in the worst case. Aside from initiating the field of algorithmic mechanism design, this conjecture led to a large collection of results on this topic, including \citet*{christodoulou2007lower,KoutsoupiasV07,DobizinskiS20,christodoulou2022nisan}, culminating in a recent proof of the conjecture \cite{christodoulou2023proof}.

\paragraph{Fair Division With Payments.} The concept of including payments in a fair division instance in order to achieve stronger fairness guarantees for the indivisible setting was first explored in the classical work of \citet*{maskin1987fair}, who showed that when the number of items is equal to the number of agents a fair allocation with payments always exists. In follow-up work, \citet*{aragones1995derivation} obtained an important characterization of the set of envy-freeable allocations as precisely those allocations that are locally efficient, and showed that the minimum envy-eliminating payments are path weights on the associated weighted envy graph. This led to a long line of research towards bounding the payments for envy-freeness (see e.g. \cite{HalpernS19,BrustleDNSV20,barman2022achieving,caragiannis2021computing,onlinesubsidy}). For the problem of bounding the payments for proportionality, \citet*{WuZZ23} use an alternate definition of proportionable allocations which does not include the payments when determining the share values; instead, our work uses the more natural definition where these payments are accounted for.

\paragraph{Fair Chore Division.} While the fair division problem is very well studied for indivisible goods (see, e.g. the survey of \citet*{amanatidis2023fair}), it is only recently that the complementary settings of indivisible chores began receiving significant research attention. While fairness definitions extend quite directly, many questions about the existence and computability of fair chore allocations remain unsolved. For example, while EF1 and PO allocations, and EFX allocations (for three agents), are known for the goods case (\citet*{caragiannis2019unreasonable,chaudhury2024efx}), the corresponding problems remain open for chores. Since the fair job scheduling problem can be interpreted as a problem in the fair division of chores, our work also extends the literature on fair chore division.

\paragraph{The Price of Fairness.} The \textit{price of fairness} measures the loss in some welfare objective (compared to the optimal welfare) after the imposition of fairness constraints, for a given fairness definition. This notion was introduced independently by \citet*{bertsimas2011price}, who study the price of proportional and max-min fairness for \textit{divisible} goods for the social welfare, and by \citet*{caragiannis2012efficiency}, who develop bounds on the price of proportionality, envy-freeness, and equitability in many settings. These results spawned a collection of works on the price of fairness for the social, egalitarian, and Nash welfare objectives, for various fairness notions including EF1 \cite{bei2021price,celine2023egalitarian}, partial and approximate EFX \cite{DBLP:conf/ec/CaragiannisGH19, DBLP:conf/aaai/FeldmanMP24, DBLP:journals/corr/abs-2205-14296}, approximate MMS \cite{barman2020optimal}, EFM and EFXM \cite{li2024complete}.

In the context of the price-of-fairness literature, the main results of \citet*{cohen2010envy} can be interpreted in the following manner: when payments are allowed, the price of envy-freeness for the makespan objective is at least $\Omega(\log \machines / \log \log \machines)$.
Similarly, in this work our main result shows that with payments, the price of proportionality for the makespan is at most $3/2$ and this is tight.

The price of fairness in the job scheduling problem has also been studied by \citet*{DBLP:journals/tcs/BiloFFMM16}, who consider envy-freeness restricted to machines with non-empty bundles, and by \citet*{DBLP:journals/scheduling/HeegerHMMNS23}, who analyze fairness with respect to jobs rather than machines.

\section{Preliminaries}

Throughout this work, we denote by $[n]$ the set $\{1, 2, \ldots, n\}$.

\subsection{Job Scheduling}
We denote the set of machines by $[\machines]$ and the set of jobs by $[\jobs]$.
We assume there are at least two machines and at least two jobs, i.e., $\machines \geq 2$ and $\jobs \geq 2$.
An instance of the job scheduling problem is represented by an $\machines$-by-$\jobs$ matrix $c$: For each machine $i \in [\machines]$, there is an associated cost $c_{i,j} \in \mathbb{R}_{\geq 0}$ for every job $j \in [\jobs]$. The costs are assumed to be additive: If a set of jobs $S \subseteq [n]$ is allocated to machine $i$, then machine $i$ incurs a total cost equal to $c_i(S) = \sum_{j\in S} c_{i,j}$.

We differentiate between instances with normalized cost functions, where the total cost for each machine to handle all jobs is scaled to a common value, and instances with general cost functions, where assigning jobs to machines involves arbitrary non-negative costs. 

\begin{definition}[General and normalized instances]\label{def:general_normalized} We consider the following cases:
\begin{itemize}
    \item 
The set of \emph{general} instances is denoted by $\general = \mathbb{R}_{\geq 0}^{\machines \times \jobs}$.
\item The set of \emph{normalized} instances is denoted by $\normalized = \{ c \in \general : c_i([\jobs]) = C \text{ for all } i \in [\machines] \text { and some } C\}$, where $C$ is the normalization factor.
\end{itemize}
\end{definition}

The goal of the job scheduling problem is to assign jobs to machines and determine appropriate payments for them. This 
is captured by the following notion of a mechanism.

\begin{definition}[Mechanisms]\label{def:mechanism}
A mechanism $M = (A,p)$ defined over a set of instances $\instances$ consists of an allocation function $A$ and a payment function $p$. 
For each instance $c \in \instances$, the allocation function determines the allocation $A(c) = (A(c)_1, \ldots, A(c)_{\machines})$, where $A(c)_i$ is the set of jobs allocated to machine $i$
and each job must be allocated to exactly one machine.
The payment function determines the payments  $p(c) = (p(c)_1, \ldots, p(c)_{\machines})$, where $p(c)_i \in \mathbb{R}$ is the payment for machine $i$.
\end{definition}

For simplicity, we use $A_i$ and $p_i$ to denote $A(c)_i$ and $p(c)_i$, respectively, when $c$ is clear from the context.

The primary objective in the job scheduling problem is to minimize the makespan, defined as the maximum cost of any machine. In this work, we aim to develop mechanisms that achieve approximately optimal makespan, as defined below.

\begin{definition}[Makespan approximation]
    An allocation rule $A$ defined over a set of instances $\instances$ provides a $\beta$-approximation to the optimal makespan for some $\beta \in [1, \infty)$ if, for every instance $c \in \mathcal{I}$, it holds that 
    \begin{align*}
    \max_{i \in [\machines]} c_i(A_i) \leq \beta \cdot \opt(c),    
    \end{align*}
    where $\opt(c)$ is defined as $\opt(c) = \min_{X} \max_{i \in [\machines]} c_i(X_i)$ and $X$ ranges over all possible allocations of jobs to machines.
\end{definition}

\subsection{Fairness Notions}

In this section, we define two relevant fairness notions for mechanisms in the job scheduling problem: proportionality and envy-freeness.
We assume that machines have \emph{quasi-linear disutilities}, meaning the disutility for machine $i$ in instance $c$ under mechanism $(A,p)$ is given by $c_i(A(c)_i) - p(c)_i$.
First, we define proportionality.

\begin{definition}[{Proportional mechanism}]\label{def:proportionality}
    A mechanism $(A,p)$ over a set of instances $\instances$ is \emph{proportional} if, for every instance $c \in \instances$ and every machine $i \in [\machines]$, we have $c_i(A_i) - p_i \leq (1/\machines) \cdot \sum_{j \in [\machines]} (c_i(A_j) - p_j)$.
\end{definition}

To simplify our analysis, we focus on allocation functions for which appropriate payments can be defined, initially ignoring the value of the payments.
To capture this idea, we introduce the following definition.

\begin{definition}[{Proportionable allocation}]\label{def:proportionable}
    An allocation function $A$ is \emph{proportionable} if it can be made proportional with some payments, i.e., there exists a payment function $p$ such that the mechanism $(A,p)$ is proportional.
\end{definition}

Next, we introduce the second fairness notion relevant to this work, namely, envy-freeness.

\begin{definition}[Envy-free mechanism]
    A mechanism $(A,p)$ over a set of instances $\instances$ is \emph{envy-free} if for every instance $c \in \instances$ and for every pair $i,j \in [\machines]$ of machines, we have $c_i(A_i) - p_i \leq c_i(A_j) - p_j$.
\end{definition}

Similarly to the notion of proportionable allocations, we have the following notion of envy-freeable allocations.

\begin{definition}[Envy-freeable allocation]\label{def:envyfreeable}
An allocation function $A$ is \emph{envy-freeable} if it can be made envy-free with some payments, i.e., there exists payment function $p$ such that the mechanism $(A,p)$ is envy-free.
\end{definition}

\section{Characterizations of Fair Allocations}\label{sec:characterization}

An important contribution of this work is a new characterization for proportionable allocation functions through the concept of mean efficiency, which we introduce here.

\begin{definition}[Mean efficiency]\label{def:mean_efficiency}
An allocation function $A$ defined over a set of instances $\instances$  is \emph{mean-efficient} if for every instance $c \in \instances$, it holds that 
\begin{align*}
    \sum_{i \in [\machines]} c_i(A_i) \leq (1/\machines) \cdot \sum_{i \in [\machines]} c_i([\jobs]).
\end{align*}
\end{definition}

The following theorem provides a formal characterization of proportionable allocation functions.

\begin{theorem}\label{thm:prop_mean_efficient}
    An allocation function $A$ for the job scheduling problem over any set of instances $\instances$ is proportionable if and only if it is mean-efficient.
\end{theorem}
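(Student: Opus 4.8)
The plan is to prove both directions by explicitly relating the proportionality constraints to the mean-efficiency inequality. The key observation is that the definition of a proportional mechanism (\Cref{def:proportionality}) imposes, for every machine $i$, the constraint
\[
c_i(A_i) - p_i \;\leq\; \frac{1}{\machines} \sum_{j \in [\machines]} \bigl(c_i(A_j) - p_j\bigr).
\]
Since the bundles $A_1, \ldots, A_{\machines}$ partition $[\jobs]$, additivity gives $\sum_{j \in [\machines]} c_i(A_j) = c_i([\jobs])$. Writing $P = \sum_{j \in [\machines]} p_j$ for the total payment, each proportionality constraint rearranges to
\[
c_i(A_i) - p_i \;\leq\; \frac{1}{\machines}\,c_i([\jobs]) - \frac{1}{\machines}\,P.
\]
This is the central reformulation I would establish first: proportionability is equivalent to the existence of real numbers $p_1, \ldots, p_{\machines}$ satisfying all $\machines$ of these inequalities simultaneously.

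For the \emph{if} direction, assume $A$ is mean-efficient. Following the payment formula suggested in the techniques section, I would set $p_i = c_i(A_i) - (1/\machines)\,c_i([\jobs])$ for each $i$. With this choice the left-hand side $c_i(A_i) - p_i$ equals exactly $(1/\machines)\,c_i([\jobs])$, so the proportionality constraint above reduces to $0 \leq -(1/\machines)\,P$, i.e.\ $P \leq 0$. It remains to check that $P = \sum_i p_i = \sum_i c_i(A_i) - (1/\machines)\sum_i c_i([\jobs]) \leq 0$, which is precisely the mean-efficiency inequality. Hence these payments make $(A,p)$ proportional, establishing proportionability.

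For the \emph{only if} direction, suppose $A$ is proportionable, so some payments $p$ satisfy all the constraints $c_i(A_i) - p_i \leq (1/\machines)\,c_i([\jobs]) - (1/\machines)\,P$. The natural move is to sum these $\machines$ inequalities over $i \in [\machines]$. The left side sums to $\sum_i c_i(A_i) - P$, and the right side sums to $(1/\machines)\sum_i c_i([\jobs]) - P$, since the $-(1/\machines)P$ term appears $\machines$ times. The two $P$ terms cancel, leaving exactly $\sum_i c_i(A_i) \leq (1/\machines)\sum_i c_i([\jobs])$, which is mean efficiency. Because this argument must hold for every instance $c \in \instances$, the allocation function $A$ is mean-efficient over $\instances$.

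I do not anticipate a serious obstacle here; the result is essentially a bookkeeping identity once the proportionality constraint is rewritten using $\sum_j c_i(A_j) = c_i([\jobs])$. The one point requiring mild care is the interplay of quantifiers over instances: mean efficiency and proportionability are both defined instance-by-instance, so I would make sure to phrase both directions for a fixed arbitrary $c \in \instances$ and then note that universality over $\instances$ follows immediately. The cancellation of the total-payment term $P$ upon summation is the crux that makes the mean-efficiency condition free of the payments, which is the conceptually satisfying part of the characterization.
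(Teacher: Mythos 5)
Your proof is correct and follows essentially the same route as the paper: the same payment formula $p_i = c_i(A_i) - (1/\machines)\,c_i([\jobs])$ for the backward direction, and the same summation-with-cancellation of the total payment for the forward direction. The only cosmetic difference is that you rewrite the proportionality constraint in terms of $P = \sum_j p_j$ up front, which the paper does implicitly within each direction.
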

\begin{proof}
    Fix an instance $c$.
    Let $A$ be any proportionable allocation function and let $p$ be a payment function so that $(A,p)$ is proportional. 
    Since the mechanism is proportional, for every agent $i$, we have $c_i(A_i) - p_i \leq (1/m) \cdot \sum_{j \in [\machines]} \left(c_i(A_j) - p_j\right)$.
    Summing these inequalities over all $i \in [m]$, we get
\begin{align*}
    \sum_{i \in [m]} & c_i(A_i) - \sum_{i \in [m]} p_i \leq (1/m) \cdot \sum_{i \in [m]} \sum_{j \in [m]} \left(c_i(A_j) - p_j\right) = (1/m) \cdot \sum_{i \in [m]} c_i([n]) - \sum_{i \in [m]} p_i
\end{align*}
which implies $\sum_{i \in [m]} c_i(A_i)  \leq (1/m) \cdot \sum_{i \in [m]} c_i([n])$. 
    Thus, any proportionable allocation must be mean-efficient.

    For the other direction, let $A$ be any mean-efficient allocation function. 
    Consider the payment function $p$ given by $p_i = c_i(A_i) - (1/m) \cdot c_i([n])$ for every machine $i \in [\machines]$. Observe that by the definition of mean efficiency,
    \begin{align*}
    \sum_{i\in[m]}{p_i} &= \sum_{i\in[m]}\left(c_i(A_i) - (1/m) \cdot c_i([n])\right) = \sum_{i \in [\machines]} c_i(A_i) - (1/m) \cdot \sum_{i \in [\machines]} c_i([n]) \leq 0 . 
    \end{align*}
    It follows that for every agent $i$, we have $c_i(A_i)-p_i = (1/m) \cdot c_i([n]) = (1/m) \cdot \sum_{j \in [\machines]} c_i(A_j) 
     \leq (1/m) \cdot \sum_{j \in [\machines]} \left(c_i(A_j) - p_j\right)$.
    Since this condition holds for any instance $c$, we get that the mechanism $(A,p)$ is proportional. Hence, any mean-efficient allocation function is proportionable, which concludes the proof. 
\end{proof}

A similar characterization for the set of allocations that can be made envy-free was previously known in the literature (\citet*{aragones1995derivation,Hartline2008,cohen2010envy}), and we present the earlier characterization here for the sake of comparison. This characterization involves the concept of local efficiency, defined below.

\begin{definition}[Local efficiency]
    An allocation function $A$ defined over a set of instances $\instances$ is \emph{locally efficient} if for every instance $c \in \instances$, there is no permutation of the (fixed) bundles of jobs among the agents that decreases the social cost, i.e., for every permutation $\pi : [\machines] \to [\machines]$, it holds that 
    \begin{align*}
        \sum_{i \in [\machines]} c_i(A_i) \leq \sum_{i\in [\machines]}c_i(A_{\pi(i)}).
    \end{align*} 
\end{definition}

We restate the characterization of \citet*{Hartline2008} below, which demonstrates that an allocation function is envy-freeable if and only if it is locally efficient.

\begin{theorem}[\citet*{Hartline2008}]\label{thm:locallyefficient}
    An allocation function $A$ for the job scheduling problem over any set of instances $\instances$ is envy-freeable if and only if it is locally efficient. 
\end{theorem}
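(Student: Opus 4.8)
The final statement to prove is Theorem~\ref{thm:locallyefficient}, the characterization of envy-freeable allocations as exactly the locally efficient ones. My plan is to prove both directions directly from the definitions, paralleling the structure of the proof of Theorem~\ref{thm:prop_mean_efficient} but using permutations instead of a single averaging step.

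For the forward direction (envy-freeable $\Rightarrow$ locally efficient), I would fix an instance $c$ and a payment function $p$ making $(A,p)$ envy-free, and fix an arbitrary permutation $\pi:[\machines]\to[\machines]$. The envy-freeness condition gives $c_i(A_i) - p_i \leq c_i(A_{\pi(i)}) - p_{\pi(i)}$ for every $i$. Summing over all $i \in [\machines]$, the payment terms $\sum_i p_{\pi(i)}$ equal $\sum_i p_i$ because $\pi$ is a bijection, so the payments cancel and I am left with $\sum_{i} c_i(A_i) \leq \sum_i c_i(A_{\pi(i)})$, which is precisely local efficiency.

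The harder and more interesting direction is local efficiency $\Rightarrow$ envy-freeable: given a locally efficient $A$, I must construct payments $p$ witnessing envy-freeness. The standard approach (following \citet*{aragones1995derivation}) is to build the weighted \emph{envy graph}: a complete directed graph on the machines where the weight of the edge from $i$ to $j$ is $c_i(A_i) - c_i(A_j)$, capturing how much $i$ would gain by taking $j$'s bundle. Envy-freeable payments exist iff one can set $p_i$ so that $c_i(A_i) - p_i \leq c_i(A_j) - p_j$ for all $i,j$, i.e. $p_i - p_j \geq c_i(A_i) - c_i(A_j)$, which is a system of difference constraints. Such a system is feasible precisely when the envy graph contains no positive-weight directed cycle; the payment $p_i$ can then be taken as the maximum weight of any path ending at $i$. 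So the crux is to show that local efficiency implies the absence of positive-weight cycles. The key step, which I expect to be the main obstacle, is to argue that a positive-weight cycle in the envy graph would correspond to a cyclic permutation of bundles that strictly decreases the total cost, contradicting local efficiency: a cycle $i_1 \to i_2 \to \cdots \to i_k \to i_1$ of positive total weight means $\sum_t \bigl(c_{i_t}(A_{i_t}) - c_{i_t}(A_{i_{t+1}})\bigr) > 0$, so reassigning each $i_t$ the bundle $A_{i_{t+1}}$ (a permutation fixing machines outside the cycle) strictly lowers $\sum_i c_i(A_i)$, contradicting local efficiency. Since this is a previously known result cited to \citet*{Hartline2008}, I would either reproduce this envy-graph argument in full or simply invoke the cited characterization; the forward direction above is short enough to include directly, and the backward direction reduces to the well-known shortest/longest-path feasibility criterion for difference constraints combined with the cycle-to-permutation correspondence.
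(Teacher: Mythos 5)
Your proposal follows essentially the same route as the paper, which does not prove this statement directly but instead proves the generalization \Cref{thm:approxlocallyefficient} (set $\alpha = 1$ to recover it): the forward direction sums the envy-freeness inequalities over a permutation so the payments cancel, and the converse builds the complete weighted envy graph, shows local efficiency rules out positive-weight directed cycles via the cycle-to-cyclic-permutation correspondence, and takes longest-path potentials as payments. One small correction to your construction: with $w(i,j) = c_i(A_i) - c_i(A_j)$, the payment $p_i$ should be the maximum weight of a path \emph{starting} at $i$ (as in the paper), since then $p_i \geq w(i,j) + p_j$ gives exactly $c_i(A_i) - p_i \leq c_i(A_j) - p_j$; taking the heaviest path \emph{ending} at $i$ yields the inequality $p_j \geq p_i + w(i,j)$, which points the wrong way.
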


In Section~\ref{sec:approx_ef}, we also consider approximate envy-freeness. In particular, we extend the above definition to the notion of \textit{approximate} local efficiency, and present an extended characterization for approximately envy-free allocations using this notion.

\section{Proportional Mechanisms}\label{sec:proportionality}

In this section, we provide our main results for proportional mechanisms.

\subsection{General Instances}
We first consider the case of general instances (see Definition~\ref{def:general_normalized}). Our first result is the following upper bound. 

\proportionalgeneralupperbound*
\begin{proof}
    We analyze the mechanism described in Algorithm~\ref{alg:alloc}.  Figure~\ref{fig:operations} shows an illustration of the operations performed by the algorithm.
                
\begin{algorithm*}[t]
    \caption{Anti-Diagonal Mechanism}
    \label{alg:alloc}

    \KwIn{$c$ is an $\machines$-by-$\jobs$ cost matrix and $B$ is any initial allocation.}
    \KwOut{$(A,p)$ is a proportional mechanism with $\max_{\ell \in [\machines]} c_\ell(A_\ell) \leq (3/2) \cdot \max_{\ell \in [\machines]} c_\ell(B_\ell)$.}

    \SetAlgoLined
    \SetKwFunction{FMain}{AntiDiagonalMechanism}
    \SetKwProg{Fn}{Procedure}{:}{}
    \Fn{\FMain{}}{
        \tcp{Indices are taken modulo $\machines$ so that $B_{\machines+1} = B_1$ and so on.}
        $k \gets \text{any } k \in [\machines] \text{ that minimizes } \sum_{i \in [\machines]} c_i(B_{\machines-i+k})$\label{lin:k_set}
        
        initialize allocation $(A_1, \ldots, A_m)$ so that $A_i \gets B_{\machines-i+k}$ for all $i \in [\machines]$ \label{lin:initialize}

        \For{$i \in [\machines]$}{
            $j \gets \machines-i+k$

            \If{$c_i(A_{j}) + c_{j}(A_i) < c_i(A_i) + c_{j}(A_{j})$}{
                        swap bundles $A_i$ and $A_j$\label{lin:swap_bundles}
            }
            
            \If{$c_i(A_j) < c_j(A_j)$ \textnormal{and} $c_i(A_i) + c_i(A_{j}) \leq (3/2) \cdot \max_{\ell \in [m]} c_\ell(B_\ell)$}{
                merge bundles $A_i$ and $A_j$ by letting $A_i \gets A_i \cup A_{j}$ and  $A_{j} \gets \emptyset$
                \label{lin:merge_bundles}

            }
            
        }
        set payments $(p_1, \ldots, p_{\machines})$ so that $p_i \gets c_i(A_i) - (1/m) \cdot c_i([\jobs])$ for all $i \in [\machines]$\label{lin:payments}
        
        \Return{$(A,p)$}
    }
\end{algorithm*}

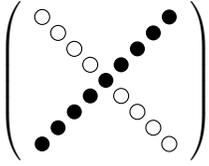
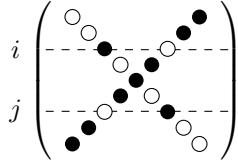
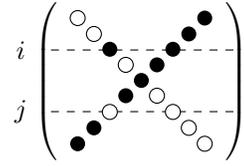
\begin{figure*}[t]
    \begin{subfigure}{0.32\textwidth}
    \centering
    \begin{tikzpicture}[scale=0.5]
        \matrix[matrix of nodes,left delimiter={(},right delimiter={)}] (m) {
        |[draw, circle, black, fill=white, inner sep=2pt]| & & & & & & & & \node[fill=black, circle, inner sep=2pt] {}; \\
        & |[draw, circle, black, fill=white, inner sep=2pt]| & & & & & & \node[fill=black, circle, inner sep=2pt] {}; &  \\
        & & |[draw, circle, black, fill=white, inner sep=2pt]| & & & & \node[fill=black, circle, inner sep=2pt] {}; & &  \\
        & & & |[draw, circle, black, fill=white, inner sep=2pt]| & & \node[fill=black, circle, inner sep=2pt] {}; & & &  \\
        & & & & \node[fill=black, circle, inner sep=2pt] {}; & & & &  \\
        & & &\node[fill=black, circle, inner sep=2pt] {}; & & |[draw, circle, black, fill=white, inner sep=2pt]| & & &  \\
        & & \node[fill=black, circle, inner sep=2pt] {}; & & & & |[draw, circle, black, fill=white, inner sep=2pt]| & &  \\
        & \node[fill=black, circle, inner sep=2pt] {}; & & & & & & |[draw, circle, black, fill=white, inner sep=2pt]| &  \\
        \node[fill=black, circle, inner sep=2pt] {}; & & & & & & & & |[draw, circle, black, fill=white, inner sep=2pt]|  \\
        };
    \end{tikzpicture}
    \caption{Anti-diagonal allocation (Line~\ref{lin:initialize})}
    \end{subfigure}
    \hfill
    \begin{subfigure}{0.32\textwidth}
    \centering
    \begin{tikzpicture}[scale=0.5]
        \node at (-3.2,0.82) {$i$};
    \node at (-3.2,-0.82) {$j$};
        \draw[dashed] (-2.4,0.82) -- (2.5,0.82);
        \draw[dashed] (-2.4,-0.82) -- (2.5,-0.82);
\matrix[matrix of nodes,left delimiter={(},right delimiter={)}] (m) {
        |[draw, circle, black, fill=white, inner sep=2pt]| & & & & & & & & \node[fill=black, circle, inner sep=2pt] {}; \\
        & |[draw, circle, black, fill=white, inner sep=2pt]| & & & & & & \node[fill=black, circle, inner sep=2pt] {}; &  \\
        & & \node[fill=black, circle, inner sep=2pt] {};  & & & & |[draw, circle, black, fill=white, inner sep=2pt]|& &  \\
        & & & |[draw, circle, black, fill=white, inner sep=2pt]| & & \node[fill=black, circle, inner sep=2pt] {}; & & &  \\
        & & & & \node[fill=black, circle, inner sep=2pt] {}; & & & &  \\
        & & &\node[fill=black, circle, inner sep=2pt] {}; & & |[draw, circle, black, fill=white, inner sep=2pt]| & & &  \\
        & & |[draw, circle, black, fill=white, inner sep=2pt]| & & & & \node[fill=black, circle, inner sep=2pt] {}; & &  \\
        & \node[fill=black, circle, inner sep=2pt] {}; & & & & & & |[draw, circle, black, fill=white, inner sep=2pt]| &  \\
        \node[fill=black, circle, inner sep=2pt] {}; & & & & & & & & |[draw, circle, black, fill=white, inner sep=2pt]|  \\
        };
    \end{tikzpicture}
    \caption{Swap operation (Line~\ref{lin:swap_bundles})}
    \end{subfigure}
    \hfill
    \begin{subfigure}{0.32\textwidth}
    \centering
   \begin{tikzpicture}[scale=0.5]
   \node at (-3.2,0.82) {$i$};
    \node at (-3.2,-0.82) {$j$};
        \draw[dashed] (-2.4,0.82) -- (2.5,0.82);
        \draw[dashed] (-2.4,-0.82) -- (2.5,-0.82);
     \matrix[matrix of nodes,left delimiter={(},right delimiter={)}] (m) {
        |[draw, circle, black, fill=white, inner sep=2pt]| & & & & & & & & \node[fill=black, circle, inner sep=2pt] {}; \\
        & |[draw, circle, black, fill=white, inner sep=2pt]| & & & & & & \node[fill=black, circle, inner sep=2pt] {}; &  \\
        & & \node[fill=black, circle, inner sep=2pt] {}; & & & & \node[fill=black, circle, inner sep=2pt] {}; & &  \\
        & & & |[draw, circle, black, fill=white, inner sep=2pt]| & & \node[fill=black, circle, inner sep=2pt] {}; & & &  \\
        & & & & \node[fill=black, circle, inner sep=2pt] {}; & & & &  \\
        & & &\node[fill=black, circle, inner sep=2pt] {}; & & |[draw, circle, black, fill=white, inner sep=2pt]| & & &  \\
        & & |[draw, circle, black, fill=white, inner sep=2pt]| & & & & |[draw, circle, black, fill=white, inner sep=2pt]| & &  \\
        & \node[fill=black, circle, inner sep=2pt] {}; & & & & & & |[draw, circle, black, fill=white, inner sep=2pt]| &  \\
        \node[fill=black, circle, inner sep=2pt] {}; & & & & & & & & |[draw, circle, black, fill=white, inner sep=2pt]|  \\
        };
\end{tikzpicture}
    \caption{Merge operation (Line~\ref{lin:merge_bundles})}
    \end{subfigure}

     \caption{Illustration of the Anti-Diagonal Mechanism for the case where $\machines = \jobs$, with the initial allocation $B_i = \{i\}$  and the parameter $k$ set to $1$. Rows represent machines and columns represent jobs. 
     Empty dots indicate the initial allocation $B$. The subfigures
     show the allocation matrix after the swap and merge operations, where full dots represent the resulting allocation.}
    \label{fig:operations}
\end{figure*}

We begin by showing that the allocation \( A \) is proportionable, regardless of the initial allocation \( B \) provided as input to Algorithm~\ref{alg:alloc}. Fix any general instance $c \in \general$ and let \( k \) be defined as in Line~\ref{lin:k_set}. After initializing \( A \) in Line~\ref{lin:initialize}, we have:
\begin{align*}
    \sum_{i \in [\machines]} c_i(A_i) &= \sum_{i \in [\machines]} c_i(B_{m-i+k}) \leq (1/m) \cdot \sum_{i \in [\machines]} \sum_{\ell \in [\machines]} c_i(B_{m-i+\ell}) = (1/m) \cdot \sum_{i \in [\machines]} c_i([\jobs]) . 
\end{align*}

This shows that allocation \( A \) is mean-efficient immediately after Line~\ref{lin:initialize}. Additionally, note that the sum \(\sum_{i \in [\machines]} c_i(A_i)\) decreases whenever the swap operation in Line~\ref{lin:swap_bundles} or the merge operation in Line~\ref{lin:merge_bundles} is executed. This is because when the swap operation is executed, the inequality \(c_i(A_j) + c_j(A_i) \leq c_i(A_i) + c_j(A_j)\) holds, and when the merge operation is executed, the condition \(c_i(A_j) \leq c_j(A_j)\) is satisfied. Therefore, allocation \( A \) remains mean-efficient throughout the execution of the algorithm, and so the mechanism is proportional since payments in Line~\ref{lin:payments} are exactly the payments given in the proof of Theorem~\ref{thm:prop_mean_efficient}.

Next, we prove that for any general instance $c \in \general$, the returned allocation satisfies \(\max_{\ell \in [\machines]} c_{\ell}(A_{\ell}) \leq (3/2) \cdot M\), where \(M = \max_{\ell \in [\machines]} c_{\ell}(B_{\ell})\). Consider any \(\ell \in [\machines]\) and let \(h = m - \ell + k\). Note that \(A_\ell\) can only be one of \(\{\emptyset, B_\ell, B_h, B_\ell \cup B_h\}\). This is because the only operations that modify \(A_\ell\) in Line~\ref{lin:swap_bundles} or Line~\ref{lin:merge_bundles} occur in the for-loop iteration with \(i = \ell\) or \(j = \ell\). The only for-loop iteration where \(j=\ell\) must be when \(i = h\), since \(m-h+k = m-(m-\ell+k)+k = \ell\).

Clearly, we have \(c_{\ell}(\emptyset) = 0 \leq (3/2) \cdot M\). By the definition of \(M\), it follows that 
\begin{align*}
c_{\ell}(B_{\ell}) \leq \max_{w\in [\machines]} c_w(B_w) \leq (3/2) \cdot M.  
\end{align*}
The only way for \(A_{\ell}\) to be set to \(B_{\ell} \cup B_h\) is in the merge operation in Line~\ref{lin:merge_bundles} during the iteration with \(i = \ell\). In this case, the if-condition ensures that
\begin{align*}
    c_{\ell}(B_{\ell} \cup B_h) = c_i(A_i) + c_i(A_j) \leq (3/2)
\cdot M.
\end{align*}

Finally, consider the scenario where \(A_{\ell} = B_h\) and \(A_h = B_{\ell}\) with the assumption that \(c_{\ell}(A_{\ell}) = c_{\ell}(B_h) > (3/2) \cdot M\). Initially, \(A_{\ell}\) is set to \(B_h\) in Line~\ref{lin:initialize}. If \(A_{\ell} = B_h\) at the end of the execution of the algorithm, then \(A_{\ell}\) and \(A_h\) were neither swapped by Line~\ref{lin:swap_bundles} nor merged by Line~\ref{lin:merge_bundles} in any for-loop iteration with \(i = \ell\) or \(i = h\).
Since Line~\ref{lin:swap_bundles} was not executed in the iteration with \(i=\ell\) and \(j=h\), the if-condition implies
\begin{align*}
    c_{\ell}(B_{h}) + c_h(B_{\ell}) = 
    c_{\ell}(A_{\ell}) + c_h(A_h) \leq
    c_{\ell}(A_h) + c_h(A_{\ell}) =  c_{\ell}(B_{\ell}) + c_h(B_{h}).
\end{align*}
Given \(c_{\ell}(B_h) > (3/2)\cdot M\) by assumption, along with \(c_{\ell}(B_{\ell}) \leq M\) and \(c_h(B_h) \leq M\), we get
\begin{align*}
    c_h(B_{\ell}) &\leq c_{\ell}(B_{\ell}) + c_h(B_{h}) - c_{\ell}(B_{h}) \leq 2 \cdot M - (3/2) \cdot M = (1/2) \cdot M,
\end{align*}
which implies that
\begin{align}
    c_h(A_h) + c_h(A_{\ell}) = c_h(B_{\ell}) + c_h(B_h) \leq (1/2) \cdot M + M = (3/2) \cdot M. \label{eq:cond1}
\end{align}
Moreover, it holds that
\begin{align}
    c_{\ell}(A_{\ell}) = c_{\ell}(B_h) > (3/2)\cdot M > c_h(B_h) = c_h(A_{\ell}). \label{eq:cond2}
\end{align}
However, this leads to a contradiction, as Inequalities~\eqref{eq:cond1} and~\eqref{eq:cond2} imply that the if-condition for the merge operation in Line~\ref{lin:merge_bundles} was met during the iteration with \(i = h\), contradicting the assumption that \(A_{\ell} = B_h\).
Therefore, we conclude that \(c_{\ell}(A_{\ell}) \leq (3/2) \cdot M\) for all \(\ell \in [\machines]\) at the end of the algorithm's execution.

The theorem follows by selecting for \( B \), the input to Algorithm~\ref{alg:alloc}, an optimal (makespan-minimizing) allocation. We then have \( M = \opt(c) \), which establishes the result.
\end{proof}

\begin{remark}
{The running time of Algorithm~\ref{alg:alloc} is polynomial in $n$ and $m$. That is, given any allocation $B$ as input, the algorithm returns a proportional allocation that gives a $3/2$-approximation with respect to the makespan of $B$, in polynomial time. By taking $B$ to be the optimal (i.e., makespan-minimizing) allocation, this implies the existence of a $3/2$-approximation stated in Theorem~\ref{thm:proportional_general_upper_bound}. However, it is well-known that finding an optimal allocation is NP-hard \cite{lenstra1990approximation}. Despite this computational challenge, the black-box nature of the analysis of Algorithm~\ref{alg:alloc} implies that, combined with the polynomial-time $2$-approximation  algorithm by \citet*{lenstra1990approximation}, one can obtain a proportional mechanism that gives a $3$-approximation to the optimal makespan in polynomial time.}
\end{remark}

We also provide a matching lower bound, showing that the result of Theorem~\ref{thm:proportional_general_upper_bound} is tight.

\proportionalgenerallowerbound*
\begin{proof}
Suppose, for the sake of contradiction, that there exists a mechanism $(A,p)$ that is proportional and provides a $(3/2-\epsilon)$-approximation to the optimal makespan.
Consider an instance $c$ defined by the following conditions: $c_{i,i} = 1$ for all $1 \leq i \leq m$, $c_{i,j} = {1}/{2}$ for all $1 \leq j < i \leq m$, $c_{i,j} = 3/2 - {\epsilon}/{2}$ for all $1 \leq i < j \leq m$, and $c_{i,j} = 0$ for all $1 \leq i \leq m$ and $m < j \leq n$. For the special case of $\machines = \jobs = 2$, we get the following cost matrix:
\begin{center}
    \begin{tikzpicture}[scale=0.5]
    \matrix[matrix of nodes,left delimiter={(},right delimiter={)}, column sep=0.75em, nodes={minimum width=0pt, minimum height=0pt, inner sep=1pt, anchor=center}] (m) {
        1 & $3/2 - \epsilon/2$ \\
        $1/2$ & 1 \\
    };
    \end{tikzpicture}
\end{center}
Note that the optimal makespan is $\mathrm{OPT}(c) = 1$ because each job $j \in [m]$ can be allocated to machine $j$, and jobs $j > m$ can be assigned to any machine without affecting the makespan. Therefore, since $(A,p)$ achieves a $(3/2-\epsilon)$-approximation, $A(c)$ must satisfy $c_i(A_i) \leq 3/2 - \epsilon$ for all machines $i \in [m]$.

We will show by induction on $k$ that every job $j \in \{k, \ldots, m\}$ must be allocated to machine $j$. For the base case, consider job $m$. We have $c_{m,m} = 1$ and $c_{i,m} = {3}/{2} - {\epsilon}/{2} > {3}/{2} - \epsilon$ for all $i < m$. Therefore, job $m$ must be allocated to machine $m$, since any other assignment results in a makespan greater than ${3}/{2} - \epsilon$.
Now, suppose that for some $1 \leq k < m$, every job $j \in \{k+1, \ldots, m\}$ is allocated to machine $j$. We will show that job $k$ must also be allocated to machine $k$. {Let $i$ be the machine that job $k$ is allocated to. First, observe that we cannot have $1 \leq i < k$ because $c_{i,k} = {3}/{2} - {\epsilon}/{2} > {3}/{2} - \epsilon$ for all $i < k$, and such an assignment would result in a makespan exceeding ${3}/{2} - \epsilon$.
Second, we cannot have $k < i \leq m$, since for these machines, $c_{i,k} = {1}/{2}$, and by the inductive assumption, each such machine $i$ already has job $i$ allocated to it, so assigning job $k$ to any such machine results in a total {cost} for machine $i$ of $c_i(A_i) \geq 1 + {1}/{2} = {3}/{2} > {3}/{2} - \epsilon$.} Consequently, job $k$ must be allocated to machine $k$. By induction, this argument holds for every job $j \in \{1, \ldots, m\}$. Therefore we have $\sum_{i \in [m]} c_i(A_i) = m$.

Since the mechanism is proportional, its allocation function must be proportionable. According to Theorem~\ref{thm:prop_mean_efficient}, this implies that for our instance, 
\begin{align*}
    \sum_{i \in [\machines]} c_i(A_i) \leq (1/m) \cdot \sum_{i \in [\machines]} \sum_{j \in [\jobs]} c_{i,j}.
\end{align*}
Let us calculate the sum $\sum_{i \in [\machines]} \sum_{j \in [\jobs]} c_{i,j}$ as follows: 
The number of terms with value $1$ is $\machines$, the number of terms with value $1/2$ is $\machines(\machines-1)/2$, and the number of terms with value $3/2-\epsilon/2$ is $\machines(\machines-1)/2$. These terms sum up to $\machines^2 - (\machines(\machines-1)/2) \cdot (\epsilon/2)$. 
It follows that 
\begin{align*}
\sum_{i \in [\machines]} c_i(A_i) \leq \machines - ((\machines-1)/2) \cdot (\epsilon/2) < \machines,    
\end{align*}
which is in contradiction with our previous observation that $\sum_{i \in [\machines]} c_i(A_i) = m$.
\end{proof}

\subsection{Normalized Instances}
Next, we consider the case of normalized instances (see Definition~\ref{def:general_normalized}). Our first result is the following upper bound.

\proportionalnormalizedupperbound*
\begin{proof}
Let $A$ be the allocation rule that selects an allocation minimizing the makespan $\max_{i \in [\machines]} c_i(A_i)$ and, subject to that, minimizes the total cost $\sum_{i \in [\machines]} c_i(A_i)$, breaking ties arbitrarily. We will demonstrate that this allocation rule is mean-efficient, from which the result follows by Theorem~\ref{thm:prop_mean_efficient}.

Consider any fixed normalized instance $c \in \normalized$ with $c_i([\jobs]) = C$ for all $i \in [\machines]$ and some constant $C$. Define a graph $G = (V, E)$, where the vertices are $V = [\machines]$ and the edges are $E = \{ (i,j) : c_j(A_i) < c_i(A_i) \}$. We will argue that the graph $G$ must be acyclic.

To see why $G$ is acyclic, assume for contradiction that there exists a cycle $(i_1, i_2, \ldots, i_k)$ such that $(i_j, i_{j+1}) \in E$ for all $j \in [k]$, taking indices modulo $m$. In this cycle, reassigning each bundle $A_{i_j}$ from machine $i_j$ to machine $i_{j+1}$ would strictly decrease the total cost since $c_{i_j}(A_{i_j}) > c_{i_{j+1}}(A_{i_j})$ and still keep the makespan optimal because 
\begin{align*}
c_{i_{j+1}}(A_{i_j}) < c_{i_j}(A_{i_j}) \leq \max_{i \in [\machines]} c_i(A_i).    
\end{align*}
This contradicts the choice of allocation rule $A$.

Since there are no cycles in $G$, there must be some machine $j \in [\machines]$ such that $(i,j) \notin E$ for all machines $i \in [\machines]$. Therefore, since we have $c \in \normalized$, it follows that 
\begin{align*}
    \sum_{i \in [\machines]} c_i(A_i) \leq \sum_{i \in [\machines]} c_j(A_i)  = c_j([\jobs]) = (1/m) \cdot \sum_{i \in [\machines]} c_i([\jobs]).
\end{align*}
Thus, $A$ is mean-efficient.
\end{proof}

The above result indicates that for proportional mechanisms, makespan approximation ratio can be improved from $3/2$ (Theorem~\ref{thm:proportional_general_lower_bound}) to $1$ (Theorem~\ref{thm:proportional_normalized_upper_bound}) when the cost functions are normalized. However, this improvement does not extend to envy-free mechanisms. We demonstrate that the lower bound for envy-free mechanisms, $\Omega(\log \machines / \log \log \machines)$, as established by \citet*{cohen2010envy}, remains valid even for normalized instances. This conclusion stems directly from the following reduction proved in Section~\ref{sec:proofs}. We present the reduction using the more general concept of $\alpha$-envy-freeness, formally defined in Section~\ref{sec:approx_ef}, which corresponds exactly to envy-freeness when $\alpha=1$.

\begin{restatable}
{lemma}{envyfreescaledreduction}\label{lem:reduction}
    Fix $\alpha \in (0,1]$ and $\beta \in [1,\infty)$. Suppose that there is an $\alpha$-envy-free mechanism for the job scheduling problem over normalized instances $\normalized$ with $\machines+1$ machines and $\jobs+1$ jobs that gives a $\beta$-approximation to the optimal makespan. Then, there is an $\alpha$-envy-free mechanism for the job scheduling problem over general instances $\general$ with $\machines$ machines and $\jobs$ jobs that gives a $\beta$-approximation to the optimal makespan.
\end{restatable}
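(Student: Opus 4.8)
The plan is to reduce a general $\machines \times \jobs$ instance to a normalized $(\machines+1)\times(\jobs+1)$ instance by adjoining one auxiliary machine and one auxiliary job whose sole purpose is to ``pad'' every row of the cost matrix up to a common total, engineered so that the assumed mechanism is forced to quarantine them. Concretely, given a general instance $c \in \general$, I would fix a large parameter $C$ (chosen at the end, exceeding both $\jobs \cdot \beta \cdot \opt(c)$ and $\max_i c_i([\jobs]) + \beta\,\opt(c)$) and define $c' \in \normalized$ on machines $[\machines+1]$ and jobs $[\jobs+1]$ by: $c'_{i,j} = c_{i,j}$ for $i \in [\machines], j \in [\jobs]$; $c'_{i,\jobs+1} = C - c_i([\jobs])$ for $i \in [\machines]$; $c'_{\machines+1,j} = C/\jobs$ for $j \in [\jobs]$; and $c'_{\machines+1,\jobs+1} = 0$. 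A one-line computation shows every row of $c'$ sums to $C$, so $c' \in \normalized$, and placing the original jobs via an optimal allocation of $c$ on machines $[\machines]$ together with job $\jobs+1$ on machine $\machines+1$ (at cost $0$) witnesses $\opt(c') \le \opt(c)$.

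Next I would run the assumed $\alpha$-envy-free, $\beta$-approximate mechanism $M' = (A',p')$ on $c'$ and argue its allocation is forced into a canonical shape. Since $M'$ is a $\beta$-approximation, $\max_{i \in [\machines+1]} c'_i(A'_i) \le \beta\,\opt(c') \le \beta\,\opt(c)$. Because $c'_{i,\jobs+1} = C - c_i([\jobs]) > \beta\,\opt(c)$ for our large $C$, no machine $i \in [\machines]$ can hold job $\jobs+1$, and hence $\jobs+1 \in A'_{\machines+1}$. Because $c'_{\machines+1,j} = C/\jobs > \beta\,\opt(c)$, machine $\machines+1$ can hold no original job; therefore $A'_{\machines+1} = \{\jobs+1\}$, and $\bigcup_{i\in[\machines]} A'_i = [\jobs]$ with each $A'_i \subseteq [\jobs]$.

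I would then define the derived mechanism $M = (A,p)$ on $c$ by discarding the auxiliary machine and job: $A_i = A'_i$ and $p_i = p'_i$ for $i \in [\machines]$. This is a valid allocation of $[\jobs]$ to $[\machines]$. For any $i,j \in [\machines]$ we have $c_i(A_j) = c'_i(A'_j)$, since $A'_j \subseteq [\jobs]$ and $c'$ agrees with $c$ on all original machine--job pairs; in particular $c_i(A_i) = c'_i(A'_i)$ as well. Whatever its precise form, the $\alpha$-envy-freeness condition between machines $i$ and $j$ is a relation among the four quantities $c_i(A_i), p_i, c_i(A_j), p_j$, each of which coincides for $M$ and $M'$; hence the condition inherited from $M'$ for every pair $i,j \in [\machines]$ certifies that $M$ is $\alpha$-envy-free. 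Moreover $\max_{i\in[\machines]} c_i(A_i) = \max_{i\in[\machines]} c'_i(A'_i) \le \beta\,\opt(c)$, so $M$ is a $\beta$-approximation.

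I expect the main obstacle to be the tension inherent in the normalization constraint: every row must sum to the same $C$, which forbids making all ``wrong'' assignments expensive by brute force. The resolution, which is the crux of the construction, is to spend almost the entire budget of the auxiliary machine on the original jobs (pushing them off it) while leaving its cost on the auxiliary job at $0$, and dually to set the padding cost of the auxiliary job on each original machine to $C - c_i([\jobs])$ --- the one quantity that both normalizes row $i$ and, for large $C$, makes that job prohibitively expensive there. The only quantitative step is checking that a single choice of $C$ simultaneously makes both families of forbidden assignments exceed $\beta\,\opt(c)$ while preserving $\opt(c') \le \opt(c)$, and this goes through because all the requirements on $C$ are lower bounds.
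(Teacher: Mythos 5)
Your proposal is correct and follows essentially the same construction as the paper's proof: adjoin a padding job of cost $C - c_i([\jobs])$ on each original machine and an auxiliary machine that is cheap only for that padding job, then use the $\beta$-approximation guarantee to force both auxiliary elements to be quarantined together, so the restriction to the original machines and jobs inherits $\alpha$-envy-freeness and the makespan bound. The paper's version is identical up to the substitution $\eta = 1/C$ --- it scales the original costs down by $\eta$ so the normalization factor is $1$ and correspondingly rescales the payments by $1/\eta$, whereas you keep the costs unscaled and normalize to a large $C$, which is a purely cosmetic difference.
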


\section{The Goods Allocation Problem}\label{sec:goods}

In this section, we consider the dual problem, where instead of allocating costly jobs, we allocate valuable goods. As is standard in the fair division literature, throughout this section, we refer to machines as agents.

An instance of the goods allocation problem is represented by an $\machines$-by-$\jobs$ matrix $v$, which specifies the value $v_{i,j}$ that agent $i \in [\machines]$ derives from being allocated a good $j \in [\jobs]$. We consider both general instances, $v \in \goods$, and normalized instances, $v \in \normalized$, as defined below.

\begin{definition}[General and normalized instances]\label{def:goods_general_normalized} We consider the following cases:
\begin{itemize}
    \item 
The set of \emph{general} instances is denoted by $\goods = \mathbb{R}_{\geq 0}^{\machines \times \jobs}$.
\item The set of \emph{normalized} instances is denoted by $\normalized = \{ v \in \goods : v_i([\jobs]) = C \text{ for all } i \in [\machines] \text { and some } C\}$, where $C$ is the normalization factor.
\end{itemize}
\end{definition}

Notably, if negative costs are permitted in the job scheduling problem, an instance of the goods allocation problem $v$ corresponds to an instance of the job scheduling problem with $c = -v$.

The goal of the goods allocation problem is to allocate goods to agents and determine the transfers they must make to the mechanism. We capture this solution using a mechanism $(A,q)$, consisting of an allocation function $A$ and a transfer function $q$, as defined in Definition~\ref{def:mechanism}. The utility of agent $i$ for instance $v$ is given by $v_i(A_i) - q_i = -(c_i(A_i) - p_i)$, where $c = -v$ and $p = -q$. Using this correspondence between (negative) instances of the job scheduling problem and instances of the goods allocation problem, the definition of proportionality in Definition~\ref{def:proportionality} becomes the following definition.

\begin{definition}[Proportionality]
A mechanism $(A,q)$ defined over a set of instances $\instances$ is \emph{proportional} if, for every instance $v \in \instances$ and every agent $i \in [\machines]$, we have 
\begin{align*}
    v_i(A_i) - q_i \geq (1/\machines) \cdot \sum_{j \in [\machines]} (v_i(A_j) - q_j).
\end{align*}    
\end{definition}

Similarly, the definition of mean efficiency (Definition~\ref{def:mean_efficiency}) is adapted as follows.

\begin{definition}[Mean efficiency]
An allocation function $A$ defined over a set of instances $\instances$  is \emph{mean-efficient} if for every instance $v \in \instances$, it holds that
\begin{align*}
    \sum_{i \in [\machines]} v_i(A_i) \geq (1/\machines) \cdot \sum_{i \in [\machines]} v_i([\jobs]).
\end{align*}
\end{definition}

The equivalence between proportionability (Definition~\ref{def:proportionable}) and mean efficiency for the goods allocation problem can be established through the same reasoning as used in the job scheduling problem. The proof of Theorem~\ref{thm:prop_mean_efficient} (Section~\ref{sec:characterization}) remains valid even when the costs are negative. Consequently, due to the correspondence between instances of the goods allocation problem and the job scheduling problem discussed earlier, we get the following corollary.

\begin{corollary}\label{cor:prop_mean_efficient_goods}
    An allocation function $A$ for the goods allocation problem over any set of instances $\instances$ is proportionable if and only if it is mean-efficient.
\end{corollary}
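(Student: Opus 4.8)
The plan is to establish \Cref{cor:prop_mean_efficient_goods} by directly leveraging the correspondence $c = -v$ between goods instances and scheduling instances, together with the already-proven characterization in \Cref{thm:prop_mean_efficient}. First I would observe that the entire argument of \Cref{thm:prop_mean_efficient} never used the sign of the cost entries: the summing-up step that derives mean efficiency from proportionality, and the explicit payment construction $p_i = c_i(A_i) - (1/\machines)\cdot c_i([\jobs])$ that recovers proportionality from mean efficiency, are purely algebraic identities valid over all of $\mathbb{R}$. Hence the characterization holds verbatim for cost matrices with negative entries, which is exactly the regime that arises when we set $c = -v$ for a nonnegative value matrix $v$.

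Next I would make the translation between the two problems explicit. Under the substitution $c = -v$ and $p = -q$, an agent's scheduling disutility $c_i(A_i) - p_i$ equals the negation of their goods utility $v_i(A_i) - q_i$. The proportionality inequality for scheduling, $c_i(A_i) - p_i \leq (1/\machines)\sum_{j}(c_i(A_j) - p_j)$, becomes, after multiplying by $-1$ and flipping the inequality, precisely the goods-proportionality inequality $v_i(A_i) - q_i \geq (1/\machines)\sum_{j}(v_i(A_j) - q_j)$. Likewise, the scheduling mean-efficiency condition $\sum_i c_i(A_i) \leq (1/\machines)\sum_i c_i([\jobs])$ becomes the goods mean-efficiency condition $\sum_i v_i(A_i) \geq (1/\machines)\sum_i v_i([\jobs])$. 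So proportionability and mean efficiency for $v$ correspond exactly to proportionability and mean efficiency for $c = -v$.

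The final step is to chain these equivalences. Given any goods allocation function $A$ over instances $\instances$, it is proportionable for $v$ if and only if there is a transfer function $q$ making $(A,q)$ proportional; by the sign-flip correspondence this holds if and only if $(A, p)$ with $p = -q$ is proportional for the scheduling instance $c = -v$; by \Cref{thm:prop_mean_efficient} (which, as noted, remains valid for negative costs) this holds if and only if $A$ is mean-efficient for $c$; and by the correspondence again this is equivalent to $A$ being mean-efficient for $v$. This string of equivalences is exactly the claimed statement.

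I do not anticipate a genuine obstacle here, since the result is essentially a formal restatement of \Cref{thm:prop_mean_efficient} under a linear change of variables. The only point requiring care—and the one I would state explicitly rather than wave past—is the verification that the proof of \Cref{thm:prop_mean_efficient} does not secretly rely on nonnegativity of the $c_{i,j}$. I would confirm that both directions (the averaging argument and the explicit payment formula) are sign-agnostic, so that applying the theorem to the negated matrix $c = -v$ is legitimate even though $c$ may have negative entries. Once this is noted, the corollary follows immediately and no separate combinatorial or analytic work is needed.
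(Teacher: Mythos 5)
Your proposal is correct and follows exactly the paper's route: the paper also obtains the corollary by observing that the proof of \Cref{thm:prop_mean_efficient} is sign-agnostic and then applying the correspondence $c=-v$, $p=-q$ to translate proportionality and mean efficiency between the two settings. Your explicit verification of the sign-flip in both the proportionality and mean-efficiency inequalities just spells out what the paper leaves implicit.
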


The corresponding objective in the goods allocation problem is to maximize the egalitarian welfare, defined as the minimum value derived by any agent. In this work, we aim to develop mechanisms that achieve approximately optimal egalitarian welfare, as defined below.

\begin{definition}[Egalitarian welfare approximation]
    We say that a mechanism $M = (A,q)$ defined over a set of instances $\instances$ provides an $\beta$-approximation to the optimal egalitarian welfare for some $\beta \in [1, \infty)$ if, for every instance $v \in \mathcal{I}$, it holds that
    \begin{align*}
    \beta \cdot \min_{i \in [\machines]} v_i(A_i) \geq \opt(v),    
    \end{align*}
    where $\opt(v)$ is defined as $\opt(v) = \max_{X} \min_{i \in [\machines]} v_i(X_i)$ and $X$ ranges over all possible allocations of goods to agents.
\end{definition}

First, we consider the case of general instances. We prove the following theorem in Section~\ref{sec:proofs}.

\goodsproportionallowerboundgeneral*

Next, we address the case of normalized instances. Notably, the proof of Theorem~\ref{thm:proportional_normalized_upper_bound} is applicable to any normalized instance of the job scheduling problem, regardless of whether the costs or the normalization factor are negative. Consequently, we directly obtain the following theorem for normalized instances of the goods allocation problem.

\goodsnormalizedtheorem*

\section{Approximately Envy-Free Mechanisms}\label{sec:approx_ef}

\citet*{cohen2010envy} presented an almost-tight logarithmic lower bound on the makespan approximation ratio of envy-free mechanisms. A key open problem is whether a constant approximation can be achieved with an approximately envy-free mechanism, such as a $1/2$-envy-free mechanism. In this section, we present a characterization of approximately-envy-free allocations, and then we show that the lower bound technique used to establish the impossibility result for envy-freeness does not extend to approximate envy-freeness. 

We begin this section with the following definitions.

\begin{definition}[Approximately envy-free mechanism]
    A mechanism $(A,p)$ over a set of instances $\instances$ is \emph{$\alpha$-envy-free} for $\alpha \in (0,1]$ if for every instance $c \in \instances$ and for every pair $i,j \in [\machines]$ of machines, we have 
    \begin{align*}
    \alpha \cdot c_i(A_i) - p_i \leq c_i(A_j) - p_j.    
    \end{align*}
    We say that a mechanism is envy-free if it is $1$-envy-free.
\end{definition}

\begin{definition}[Approximately envy-freeable allocation]\label{def:approxenvyfreeable}
An allocation function $A$ is \emph{$\alpha$-envy-freeable} for some $\alpha \in (0,1]$ if it can be made $\alpha$-envy-free with some payments, i.e., there exists payment function $p$ such that the mechanism $(A,p)$ is $\alpha$-envy-free.
\end{definition}

Note that in this definition of approximate envy-free allocations with payments, in order to obtain a useful characterization (which is independent of the total quantity of payments), it is necessary that we only multiply the bundle costs, and not the payments, by $\alpha$. Otherwise, any allocation function can be made $\alpha$-envy-free for every $\alpha < 1$ by setting payments to be large enough.

Next, we present an extension of the definition of local efficiency, which will be useful in our characterization.

\begin{algorithm*}[t]
    \caption{Cyclic Mechanism}
    \label{alg:cyclic_ef}

    \KwIn{$c$ is an $\machines$-by-$\jobs$ cost matrix and $B$ is any initial allocation.}
    \KwOut{$(A,p)$ is a $(1-\epsilon)$-cyclic-envy-free mechanism with $\max_{\ell \in [\machines]} c_\ell(A_\ell) \leq (1/\epsilon) \cdot \max_{\ell \in [\machines]} c_\ell(B_\ell)$.}

    \SetAlgoLined
    \SetKwFunction{FMain}{CyclicMechanism}
    \SetKwProg{Fn}{Procedure}{:}{}

    \Fn{\FMain{}}{
        \tcp{All machine indices are taken modulo $\machines$ so that $A_0 = A_\machines$ and $c_0 = c_\machines$.}

        initialize allocation $(A_1, \ldots, A_m)$ so that $A_i \gets B_{i}$ for all $i \in [\machines]$ 

        \While{\textnormal{there exists $i \in [\machines]$ such that $c_i(B_k) < (1-\epsilon) \cdot c_{i-1}(B_k)$ for some $B_k \subseteq A_{i-1}$}}{
            reallocate $B_k$ to machine $i$ by letting $A_i \gets A_i \cup B_k$ and $A_{i-1} \gets A_{i-1} \setminus B_k$\label{lin:reallocate}
        }

        set payments $(p_1, \ldots, p_{\machines})$ so that $p_i \gets (1-\epsilon) \cdot c_i(A_i)$ for all $i \in [\machines]$\label{lin:cyclic_payments}
        
        \Return{$(A,p)$}
    }
\end{algorithm*}

\begin{definition}[Approximate local efficiency]
    An allocation function $A$ defined over a set of instances $\instances$ is \emph{$\alpha$-locally efficient} for some $\alpha \in (0,1]$ if for every instance $c \in \instances$, there is no permutation of a subset of the (fixed) bundles of jobs among the agents that decreases the social cost by more than a factor of $\alpha$, i.e., for every subset $S \subseteq \{1, \ldots, \machines \}$ and every permutation $\pi : S \to S$, it holds that 
    \begin{align*}
        \alpha \cdot \sum_{i \in S} c_i(A_i) \leq \sum_{i\in S}c_i(A_{\pi(i)}).
    \end{align*} 
\end{definition}

The following theorem formally characterizes the set of $\alpha$-envy-freeable allocation functions. This theorem is an adjustment to a similar well-known characterization for envy-free allocation functions from the literature (\citet*{aragones1995derivation,Hartline2008,cohen2010envy,HalpernS19,BrustleDNSV20}).

\begin{restatable}
{theorem}{approxlocallyefficient}\label{thm:approxlocallyefficient}
    An allocation function $A$ for the job scheduling problem over any set of instances $\instances$ is $\alpha$-envy-freeable if and only if it is $\alpha$-locally efficient. 
\end{restatable}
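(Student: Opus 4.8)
The plan is to reduce $\alpha$-envy-freeability to the feasibility of a system of difference constraints on the payment vector, and then to invoke the standard equivalence between feasibility of such a system and the absence of negative-weight cycles in an associated weighted digraph. This mirrors the approach used for the $\alpha = 1$ case in \citet{aragones1995derivation,Hartline2008}, adapted to track the extra factor $\alpha$.

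First I would rewrite the $\alpha$-envy-free condition $\alpha \cdot c_i(A_i) - p_i \leq c_i(A_j) - p_j$ in the equivalent form $p_j - p_i \leq c_i(A_j) - \alpha \cdot c_i(A_i)$, required to hold for every ordered pair $i,j \in [\machines]$. Thus $A$ is $\alpha$-envy-freeable if and only if there exists a vector $p \in \mathbb{R}^{\machines}$ satisfying all of these difference constraints simultaneously. I would then define the \emph{$\alpha$-envy graph} $G$: a complete directed graph on vertex set $[\machines]$ in which the edge $i \to j$ carries weight $w(i \to j) = c_i(A_j) - \alpha \cdot c_i(A_i)$, so that each constraint $p_j - p_i \leq w(i \to j)$ corresponds to one edge.

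Next, by the standard theory of difference constraints (equivalently, Bellman--Ford / shortest-path duality), the system $\{\, p_j - p_i \leq w(i \to j) \,\}$ is feasible if and only if $G$ contains no negative-weight cycle. Moreover, when it is feasible, one may take $p_i$ to be the shortest-path distance from an auxiliary source vertex joined to every machine by a zero-weight edge, since the shortest-path optimality inequalities $p_j \leq p_i + w(i \to j)$ are exactly the desired constraints. This step both establishes the equivalence of $\alpha$-envy-freeability with the no-negative-cycle condition and produces explicit payments as path weights in $G$.

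Finally I would show that $G$ has no negative-weight cycle if and only if $A$ is $\alpha$-locally efficient. A simple cycle $i_1 \to i_2 \to \cdots \to i_k \to i_1$ has total weight $\sum_{t} c_{i_t}(A_{i_{t+1}}) - \alpha \sum_{t} c_{i_t}(A_{i_t})$, which equals $\sum_{i \in S} c_i(A_{\pi(i)}) - \alpha \sum_{i \in S} c_i(A_i)$ for $S = \{i_1,\dots,i_k\}$ and $\pi$ the associated cyclic permutation; nonnegativity of this quantity is precisely the $\alpha$-local efficiency inequality for the single cycle $\pi$. For the converse, given an arbitrary subset $S$ and permutation $\pi : S \to S$, I would decompose $\pi$ into its disjoint cycles and sum the per-cycle inequalities, which shows that $\alpha$-local efficiency over all subsets and permutations is equivalent to its restriction to single cycles and hence to the negative-cycle condition on $G$. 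The main obstacle is purely one of bookkeeping: one must track carefully that the factor $\alpha$ multiplies only the own-bundle terms $c_{i}(A_{i})$ so that the cycle-weight computation aligns with the $\alpha$-local efficiency inequality, and one must verify via the cycle-decomposition argument that restricting attention to single cycles loses no generality. Everything else follows from the standard shortest-path characterization.
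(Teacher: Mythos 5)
Your proposal is correct and follows essentially the same route as the paper: both encode the payment constraints as edge weights on a complete digraph over the machines, identify the absence of bad cycles with $\alpha$-local efficiency, and realize the payments as extremal path weights (your shortest paths from an auxiliary source are the paper's heaviest paths up to a sign flip). The only cosmetic difference is that you derive the forward direction through the negative-cycle equivalence plus a cycle decomposition of $\pi$, whereas the paper obtains it by directly summing the envy-freeness inequalities over $S$; both are sound.
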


\begin{proof}
    For the forward direction, let $A$ be an $\alpha$-envy-freeable allocation.
    Then there exists a payment vector $p$ such that $\alpha \cdot c_i(A_i) - p_i \leq c_i(A_j) - p_j$ for all pairs $i,j$ of agents. In particular, for any subset $S \subseteq \{1, \ldots, \machines \}$ and any permutation $\pi : S \to S$, we have that for all $i\in S$, $\alpha \cdot c_i(A_i) - p_i \leq c_i(A_{\pi(i)}) - p_{\pi(i)}$. Summing this expression over all $i\in S$, we get
    \begin{align*}
        &\sum_{i\in S}[\alpha \cdot c_i(A_i) - p_i] \leq \sum_{i\in S}[c_i(A_{\pi(i)}) - p_{\pi(i)}] \\
        \Longleftrightarrow \quad
        &\sum_{i\in S}\alpha \cdot c_i(A_i) - \sum_{i\in S}p_i \leq \sum_{i\in S}c_i(A_{\pi(i)}) - \sum_{i\in S}p_{\pi(i)} \\
        \Longleftrightarrow \quad
        &\alpha\cdot\sum_{i\in S} c_i(A_i) \leq \sum_{i\in S}c_i(A_{\pi(i)})
    \end{align*}
    implying that the allocation $A$ is $\alpha$-locally efficient.
    
    For the other direction, let $A$ be an $\alpha$-locally efficient allocation. This means that for every subset $S \subseteq \{1, \ldots, \machines \}$ and every permutation $\pi : S \to S$, the inequality $\alpha\cdot\sum_{i\in S} c_i(A_i) \leq \sum_{i\in S}c_i(A_{\pi(i)})$ holds. Consider the complete directed graph whose vertices are the agents, and for every arc $(i,j)$, assign weight $w(i,j) = \alpha \cdot c_i(A_i) - c_i(A_j)$.
    
    We will first show that by $\alpha$-local efficiency, this graph has no directed cycle of positive total weight (summed over the arcs of the cycle). Suppose for a contradiction that some directed cycle $C$ in this graph has a positive total weight. Without loss of generality, let $C = (1,2,\ldots,r)$. Let $S$ be the vertices of $C$, and let permutation $\pi$ be defined in the following manner: $\pi(i) = i+1$ for $i\in\{1,\ldots,r-1\}$ and $\pi(r) = 1$. Since $C$ has a positive total weight, we have
    \begin{align*}
        \sum_{i\in S}[\alpha c_i(A_i) - c_i(A_{\pi(i)})] > 0
        \quad \Longleftrightarrow \quad
        \alpha\cdot\sum_{i\in S}c_i(A_i) > \sum_{i\in S} c_i(A_{\pi(i)})
    \end{align*}
    which contradicts $\alpha$-local efficiency. Now, let $H(i)$ be the heaviest path (summing all of the arc-weights) that starts at vertex $i$. Since the associated graph has no positive-weight directed cycles, the heaviest path $H(i)$ is well-defined for every $i$. Let the payment $p_i$ be equal to the total weight of all arcs on the path $H(i)$. We will show that the allocation $(A,p)$ is $\alpha$-envy-free. For every pair $i,j$ of agents, we have
    \begin{align*}
        p_i = \sum_{(u,v)\in H(i)}w(u,v) \geq w(i,j) + \sum_{(u,v)\in H(j)}w(u,v) = w(i,j) + p_j
    \end{align*}
    where the inequality follows from the fact that $H(i)$ is the heaviest path that starts at $i$. Substituting $w(i,j)$ with $\alpha c_i(A_i) - c_i(A_j)$, we get
    \[\alpha \cdot c_i(A_i) - p_i \leq c_i(A_j) - p_j
    \]
    which is the condition for $\alpha$-envy-freeness.
\end{proof}

We are now ready to analyze the lower bound example of \cite{cohen2010envy} for the approximate envy-freeness notion. While not explicitly stated in these terms, a closer look at the proof in \cite{cohen2010envy} reveals that it only uses the property that, in an envy-free mechanism $(A,p)$, it must hold that $c_i(A_i) - p_i \leq c_i(A_{i-1}) - p_{i-1}$ for every machine $i \in [\machines]$, where $A_0 = A_{\machines}$ and $p_0 = p_{\machines}$. We define a stronger version of envy-freeness with payments that formally captures this feature of their counterexample. 

\begin{definition}[Cyclic-envy-free mechanism]
     A mechanism $(A,p)$ defined over a set of instances $\instances$ is \emph{$\alpha$-cyclic-envy-free} for $\alpha \in (0,1]$ if for every instance $c \in \instances$ and for every machine $i\in [\machines]$, we have 
     \begin{align*}
     \alpha \cdot c_i(A_i) - p_i \leq c_i(A_{i-1}) - p_{i-1}    
     \end{align*}
     where $A_0 = A_{\machines}$ and $p_0 = p_{\machines}$. We say that a mechanism is cyclic-envy-free if it is $1$-cyclic-envy-free.
\end{definition}

We can restate the lower bound from \cite{cohen2010envy} in its stronger form: There is no cyclic-envy-free mechanism that provides a better-than-$\Omega(\log m / \log \log m)$-approximation to the optimal makespan. 
In the following theorem, we demonstrate that this bound does not hold for a relaxation of cyclic-envy-freeness.

\begin{theorem}
    For every $\epsilon > 0$, there is a $(1-\epsilon)$-cyclic-envy-free mechanism for the job scheduling problem over general instances $\general$ that gives a $1/\epsilon$-approximation to the optimal makespan.
\end{theorem}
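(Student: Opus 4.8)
The plan is to analyze the Cyclic Mechanism (\Cref{alg:cyclic_ef}) and instantiate it with $B$ equal to a makespan-minimizing allocation, so that $M = \max_{\ell \in [\machines]} c_\ell(B_\ell) = \opt(c)$; the claimed $1/\epsilon$-approximation and $(1-\epsilon)$-cyclic-envy-freeness then follow from the algorithm's output guarantee. The argument splits into three parts: (i) the while-loop terminates, (ii) the returned mechanism is $(1-\epsilon)$-cyclic-envy-free, and (iii) the makespan grows by at most a factor $1/\epsilon$.

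For termination, I would first observe that throughout the execution each original bundle $B_k$ is kept intact and lives on exactly one machine, so the algorithm's state is a function from bundle indices to machine indices and the state space is finite. Each reallocation in \Cref{lin:reallocate} moves some $B_k$ from machine $i-1$ to machine $i$ under the condition $c_i(B_k) < (1-\epsilon) c_{i-1}(B_k)$, which (since costs are nonnegative, forcing $c_{i-1}(B_k) > 0$) strictly decreases the total cost $\sum_{\ell} c_\ell(A_\ell)$. As the total cost is a function of the state and strictly decreases, no state can repeat and the loop halts.

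For cyclic-envy-freeness, at termination the stopping condition gives, for every machine $i$ and every original bundle $B_k \subseteq A_{i-1}$, the inequality $c_i(B_k) \geq (1-\epsilon) c_{i-1}(B_k)$; summing over all bundles comprising $A_{i-1}$ and using additivity yields $c_i(A_{i-1}) \geq (1-\epsilon) c_{i-1}(A_{i-1})$. With the payments $p_i = (1-\epsilon) c_i(A_i)$ set in \Cref{lin:cyclic_payments}, the left side of the $(1-\epsilon)$-cyclic-envy-free condition equals $(1-\epsilon) c_i(A_i) - p_i = 0$, while the right side equals $c_i(A_{i-1}) - (1-\epsilon) c_{i-1}(A_{i-1}) \geq 0$, so the condition holds for every $i$.

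The makespan bound is the crux of the argument, and I expect it to be the main obstacle. The key structural observation is that bundles only move \emph{forward} along the cycle (from $i-1$ to $i$), one step at a time, so a bundle $B_k$ residing on machine $i$ at termination reached it via a chain of single forward steps $k \to k+1 \to \cdots \to i$; chaining the move conditions gives $c_i(B_k) \leq (1-\epsilon)^{d} c_k(B_k)$, where $d = (i-k) \bmod \machines$ is the forward distance. I would first rule out a bundle wrapping fully around the cycle, since that would force $c_k(B_k) \leq (1-\epsilon)^{\machines} c_k(B_k) < c_k(B_k)$; hence $d \in \{0, 1, \ldots, \machines-1\}$ and the displacement equals $d$ exactly. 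Since distinct source indices $k$ give distinct distances $d = (i-k)\bmod \machines$, and $c_k(B_k) \leq M$ for all $k$, summing over the bundles currently on machine $i$ gives $c_i(A_i) \leq M \sum_{d=0}^{\machines-1} (1-\epsilon)^d \leq M \sum_{d=0}^{\infty}(1-\epsilon)^d = M/\epsilon$. The delicate point is establishing this geometric-decay-with-distinct-distances structure — in particular, arguing that the per-step cost contractions chain correctly even though reallocations of different bundles are interleaved in time; this relies on the fact that each $c_j(B_k)$ is an intrinsic cost of the fixed job set $B_k$ on machine $j$ and is unaffected by the rest of machine $j$'s load.
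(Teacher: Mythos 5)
Your proposal is correct and follows essentially the same route as the paper: the same chained/inductive bound $c_i(B_k) \leq (1-\epsilon)^{(i-k) \bmod \machines} \cdot c_k(B_k)$ summed as a geometric series for the makespan, and the same payments $p_i = (1-\epsilon)\cdot c_i(A_i)$ combined with the stopping condition to get $(1-\epsilon)$-cyclic-envy-freeness. The only substantive addition is your explicit termination argument via the strictly decreasing total cost over a finite state space, a detail the paper leaves implicit.
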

\begin{proof}
    Consider the mechanism described in Algorithm~\ref{alg:cyclic_ef}.
    This algorithm terminates because each reallocation strictly decreases the cost of the reallocated bundle, and hence, no bundle can be reallocated more than $m$ times.

We first show that the resulting allocation satisfies $\max_{\ell \in [\machines]} c_{\ell}(A_{\ell}) \leq ({1}/{\epsilon}) \cdot \max_{\ell \in [\machines]} c_{\ell}(B_{\ell})$. Consider any bundle $B_k$ for some $k \in [\machines]$. We will prove by induction on $(i-k) \bmod m$ that if $B_k \subseteq A_i$ at any point during the execution of the algorithm, then 
\begin{align}
    c_i(B_k) \leq (1-\epsilon)^{(i-k) \bmod m} \cdot c_k(B_k). \label{eq:ind}
\end{align}

For the base case, clearly $c_k(B_k) \leq (1-\epsilon)^0 \cdot c_k(B_k)$. Now, assume the statement holds for all $j \in [\machines]$ with $(j-k) \bmod m < (i-k) \bmod m$. By the algorithm's construction, $B_k$ can only be included in $A_i$ via the reallocation operation in Line~\ref{lin:reallocate}, which implies that $B_k$ was first a subset of $A_{i-1}$. Since $i \neq k$, we have $((i-1)-k) \bmod m < (i-k) \bmod m$. Thus, by the inductive assumption, 
\begin{align*}
    c_{i-1}(B_k) < (1-\epsilon)^{(i-1-k) \bmod m} \cdot c_k(B_k).
\end{align*} 
Since Line~\ref{lin:reallocate} is executed only if $c_i(B_k) < (1-\epsilon) \cdot c_{i-1}(B_k)$, we obtain
\begin{align*}
c_i(B_k) < (1-\epsilon) \cdot c_{i-1}(B_k) < (1-\epsilon) \cdot (1-\epsilon)^{(i-1-k) \bmod m} \cdot c_k(B_k) = (1-\epsilon)^{(i-k) \bmod m} \cdot c_k(B_k)
\end{align*}
which proves the claim. Therefore, we conclude that
\begin{align*}
c_i(A_i) &= \sum_{k : B_k \subseteq A_i} c_i(B_k) && (\text{by the design of the algorithm}) \\
&\leq \sum_{k \in [\machines]} (1-\epsilon)^{(i-k) \bmod m} \cdot c_k(B_k) && (\text{by Inequality~\eqref{eq:ind}}) \\
&\leq \left( \max_{k \in [\machines]} c_k(B_k)\right)  \cdot \sum_{k=0}^{m-1} (1-\epsilon)^k \\
&\leq ({1}/{\epsilon}) \cdot \max_{k \in [\machines]} c_k(B_k). && (\text{since $\sum_{k=0}^\infty (1-\epsilon)^k = 1/\epsilon$.})
\end{align*}
By setting $B$ as the optimal makespan-minimizing allocation, the mechanism achieves a ${1}/{\epsilon}$-approximation to the optimal makespan.

Next, we argue that the mechanism is $(1-\epsilon)$-cyclic-envy-free, regardless of the initial allocation $B$. Consider any machine $i \in [\machines]$. We need to show that $(1-\epsilon) \cdot c_i(A_i) - p_i \leq c_i(A_{i-1}) - p_{i-1}$. By Line~\ref{lin:cyclic_payments}, we have $p_i = (1-\epsilon) \cdot c_i(A_i)$ and $p_{i-1} = (1-\epsilon) \cdot c_{i-1}(A_{i-1})$. 
Since no bundle $B_k \subseteq A_{i-1}$ has been reallocated to $A_i$ in Line~\ref{lin:reallocate}, we have $c_i(B_k) \geq (1-\epsilon) \cdot c_{i-1}(B_k)$ for all $B_k \subseteq A_{i-1}$. Summing over all such bundles, we get $c_i(A_{i-1}) \geq (1-\epsilon) \cdot c_{i-1}(A_{i-1})$. Therefore,
\begin{align*}
(1-\epsilon) \cdot c_i(A_i) - p_i = 0 \leq c_i(A_{i-1}) - (1-\epsilon) \cdot c_{i-1}(A_{i-1}) = c_i(A_{i-1}) - p_{i-1}
\end{align*}
which concludes the proof.
\end{proof}

The above theorem demonstrates that the lower bound technique from \cite{cohen2010envy} does not extend to approximate envy-freeness. We leave as an open problem whether a similar result can be achieved for the stronger condition given by relaxations of envy-freeness, rather than cyclic-envy-freeness.

\section{Conclusion}
Our results advance our understanding of the interplay between the job scheduling problem and fair division. Our main result demonstrates a job scheduling mechanism that achieves a tight approximation factor of $3/2$ for the optimal makespan, while guaranteeing an allocation that is proportionally fair. We also present several results for normalized instances and for fair division of goods. Our work also raises important questions about the price of fairness for the makespan objective for other fairness notions. 
For example, does there exist a mechanism that achieves approximate envy-freeness (see Section~\ref{sec:approx_ef}), while guaranteeing a constant factor approximation to the optimal makespan?
Similarly, does there exist a mechanism that  
satisfies other common relaxations of envy-freeness, such as EF1~\cite{budish2011approxCEEI} or EFX~\cite{caragiannis2019unreasonable}, 
and achieves a constant factor approximation to the optimal makespan?
Finally, can the existence results in Theorem~\ref{thm:proportional_normalized_upper_bound} and Theorem~\ref{thm:goods_normalized} be extended to efficient algorithms for normalized instances?

\bibliographystyle{plainnat}
\bibliography{references}

\appendix

\section{Missing Proofs}\label{sec:proofs}

\envyfreescaledreduction*
\begin{proof}
Let \((A, p)\) be an \(\alpha\)-envy-free mechanism defined over normalized instances \(\normalized\) with \(\machines + 1\) machines and \(\jobs + 1\) jobs, providing a \(\beta\)-approximation to the optimal makespan. We define a mechanism \((B, q)\) for the set of general instances \(\general\) with \(\machines\) machines and \(\jobs\) jobs as follows.

Fix a general instance \(c \in \general\) with \(\machines\) machines and \(\jobs\) jobs. Choose \(\eta > 0\) small enough such that
\[
\eta \cdot \left( \beta \cdot \opt(c) + \max_{i \in [\machines]} c_i([\jobs]) \right) < 1 \]
and
\[\quad \eta \cdot \beta \cdot \opt(c) < 1 / \jobs.
\]
Note that such an \(\eta\) always exists since \(\opt(c) \geq 0\) and \(\beta \geq 0\) and \(\max_{i\in[\machines]} c_i([\jobs]) \geq 0\).

Consider an \((\machines + 1)\)-by-\((\jobs + 1)\) instance \(\bar{c}\) where 
\begin{align*}
    &\bar{c}_{i, j} = \eta \cdot c_{i, j} \quad \text{for all \(i \in [\machines]\) and \(j \in [\jobs]\)}, \\
    &\bar{c}_{i, \jobs + 1} = 1 - \eta \cdot c_i([\jobs]) \quad \text{for all \(i \in [\machines]\)}, \\
    &\bar{c}_{\machines + 1, j} = 1 / \jobs \quad \text{for all \(j \in [\jobs]\), and} \\
    &\bar{c}_{\machines + 1, \jobs + 1} = 0.
\end{align*}

First, note that \(\opt(\bar{c}) \leq \eta \cdot \opt(c)\) since we can allocate all jobs in \([\jobs]\) to machines in \([\machines]\) as in the optimal solution for \(c\), and allocate job \(\jobs + 1\) to machine \(\machines + 1\). By assumption, \(A(\bar{c})\) gives a \(\beta\)-approximation to the optimal makespan, and so
\begin{align*}
    \max_{i \in [\machines + 1]} \bar{c}_i(A_i) \leq \beta \cdot \opt(\bar{c}).
\end{align*}

Consider the allocation \(A(\bar{c})\) and payments \(p(\bar{c})\).
For every job \(j \in [\jobs]\),
\begin{align*}
\max_{i \in [\machines + 1]} \bar{c}_i(A_i) &\leq \beta \cdot \opt(\bar{c}) && (\text{since $A$ gives a $\beta$-approximation}) \\
&\leq \beta \cdot \eta \cdot \opt(c) && (\text{by the construction of $\bar{c}$}) \\
&< 1 / \jobs  && (\text{by the choice of $\eta$})\\
&= \bar{c}_{\machines + 1, j} && (\text{by the construction of $\bar{c}$})
\intertext{which means that job \(j\) cannot be allocated to machine \(\machines + 1\), and must be allocated to some machine \(i \in [\machines]\).
Similarly, for every machine \(j \in [\machines]\),}
\max_{i \in [\machines + 1]} \bar{c}_i(A_i) &\leq \beta \cdot \opt(\bar{c}) && (\text{since $A$ gives a $\beta$-approximation}) \\
&\leq \beta \cdot \eta \cdot \opt(c) && (\text{by the construction of $\bar{c}$}) \\
&< 1 - \eta \cdot \max_{i \in [\machines]} c_i([\jobs])  && (\text{by the choice of $\eta$}) \\
&\leq 1-\eta \cdot c_j([\jobs]) \\
&= \bar{c}_{j, \jobs + 1}  && (\text{by the construction of $\bar{c}$})  
\end{align*}
which means that job \(\jobs + 1\) cannot be allocated to machine \(j\), and so must be allocated to machine \(\machines + 1\).

We define the allocation \(B(c)\) by setting \(B(c)_i = A(\bar{c})_i\) for every machine \(i \in [\machines]\), and the payments \(q(c)\) by setting \(q(c)_i = (1/\eta) \cdot p(\bar{c})_i\).
Since every job in \([\jobs]\) is allocated to some machine in \([\machines]\), this is a proper allocation. 
Moreover, since job \(\jobs + 1\) is allocated to machine \(\machines + 1\), we have \(A_i \subseteq [\jobs]\) for all \(i \in [\machines]\).

We bound the makespan of $B(c)$ as follows,
\begin{align*}
\max_{i \in [\machines]} c_i(B(c)_i) &= \max_{i \in [\machines]} c_i(A(\bar{c})_i) = (1 / \eta) \cdot \max_{i \in [\machines]} \bar{c}_i(A(\bar{c})_i) \leq (1 / \eta) \cdot \beta \cdot \opt(\bar{c}) \leq \beta \cdot \opt(c)
\end{align*}
which means that \(B(c)\) gives a \(\beta\)-approximation to the optimal makespan.

Finally, we verify that \((B, q)\) is an \(\alpha\)-envy-free mechanism since \((A, p)\) is \(\alpha\)-envy-free, and so
\begin{align*}
\alpha \cdot c_i(B(c)_i) - q_i = (1/\eta) \cdot (\alpha \cdot \bar{c}_i(A(\bar{c})_i) - p_i) \leq (1/\eta) \cdot (\bar{c}_i(A(\bar{c})_j) - p_j) = c_i(B(c)_j) - q_j.
\end{align*}
This concludes the proof.
\end{proof}

\goodsproportionallowerboundgeneral*
\begin{proof}
Consider an instance $v$ defined as follows. For all goods $j \in [\machines-1]$, we set $v_{1, j} = 0$ and $v_{i,j} = 1$ for all agents $i \in [\machines] \setminus \{1\}$. For the good $\machines$, we set $v_{1,\machines} = 1$ and $v_{i,\machines} = \machines(\machines+1)/(\machines-1)$ for all agents $i \in [\machines] \setminus \{1\}$. For the remaining goods $j \in [\jobs] \setminus [\machines]$, we define $v_{i,j} = 0$ for all agents $i \in [\machines]$. 

Note that $\opt(v) = 1$ since we can allocate good $\machines$ to agent $1$ and  one good in $[\machines-1]$ to each of the agents in $[\machines] \setminus \{1\}$, resulting in each agent in $[\machines]$ having a value of exactly $1$. Fix any $\beta$-approximate mechanism $(A,q)$. This mechanism must allocate good $\machines$ to agent $1$, because if it does not, agent $1$ will have a value of $0$, resulting in 
\begin{align*}
\beta \cdot \min_{i \in [\machines]} v_i(A_i) = \beta \cdot 0 < \opt(v) = 1.    
\end{align*}
Therefore, $\sum_{i \in [\machines]} v_i(A_i) \leq \machines$ since $v_{i,j} \leq 1$ for all goods $j \in [\machines-1]$ and all agents $i \in [\machines]$ and $v_{1,\machines} = 1$. However, this means that $A$ cannot be mean-efficient since 
\begin{align*}
     ({1}/{\machines}) \cdot \sum_{i=1}^{\machines} v_i([\jobs]) &\geq ({1}/{\machines}) \cdot \sum_{i=2}^{\machines} v_i(\{\machines\}) = \machines + 1 > \machines \geq \sum_{i=1}^{\machines} v_i(A_i).
\end{align*}
By Corollary~\ref{cor:prop_mean_efficient_goods}, it follows that no proportional mechanism can provide a $\beta$-approximation to the optimal egalitarian welfare.
\end{proof}

\end{document}